\documentclass[10pt,letterpaper]{article}

\usepackage[utf8]{inputenc}
\usepackage[T1]{fontenc}
\usepackage[english]{babel}
\usepackage{csquotes}
\usepackage{CJKutf8}
\newcommand{\chinese}[1]{\begin{CJK*}{UTF8}{gbsn}#1\end{CJK*}}

\usepackage[margin=0.75in,columnsep=0.22in]{geometry}

\usepackage{setspace}
\usepackage{graphicx}

\usepackage{amsmath, amssymb, amsfonts}
\usepackage{amsthm, mathrsfs, mathtools}
\usepackage{algorithm}
\usepackage{algorithmic}

\usepackage{booktabs}

\usepackage{hyperref}
\hypersetup{
  colorlinks=false,
  pdfborder={0 0 0.3},
  linkbordercolor={0.5 0.5 0.5},
  citebordercolor={0.5 0.5 0.5},
  urlbordercolor={0.5 0.5 0.5}
}
\usepackage[capitalise]{cleveref}

\crefname{appendix}{Section}{Sections}
\Crefname{appendix}{Section}{Sections}
\crefname{subappendix}{Section}{Sections}
\Crefname{subappendix}{Section}{Sections}

\usepackage[backend=biber,style=alphabetic, minbibnames=99,sorting=nty,sortcites=true,maxbibnames=99]{biblatex}
\usepackage{titling}

\usepackage{authblk}

\usepackage{enumitem} % Customized lists
\usepackage{caption} % Customizing captions
\usepackage{subcaption} % Sub-figures
\usepackage{booktabs} % Professional tables
\usepackage{microtype} % Micro-typographic enhancements
\usepackage{dsfont} % Double-struck fonts
\usepackage{faktor} % Quotient structures

\usepackage[baseline]{euflag}

\setlist[itemize]{leftmargin=*,topsep=2pt,itemsep=1pt}
\setlist[enumerate]{leftmargin=*,topsep=2pt,itemsep=1pt}

\usepackage{titlesec}

\titleformat{\paragraph}[runin]
  {\normalfont\scshape}
  {}
  {0pt}
  {}
  [.]
\titlespacing*{\paragraph}
  {0pt}{0.8em}{0.6em}

\let\epsilon\varepsilon

\let\phi\varphi

\theoremstyle{plain}
\newtheorem{theorem}{Theorem}[section]

\newtheorem{proposition}[theorem]{Proposition}

\newtheorem{definition}[theorem]{Definition}
\newtheorem{example}[theorem]{Example}
\newtheorem{remark}[theorem]{Remark}
\newtheorem*{theorem*}{Theorem}

\newcommand*{\Tr}[1]{\mathop{}\!\mathrm{Tr}{\left(#1\right)}}

\newcommand{\Span}[1]{\mathrm{Span}\left\lbrace #1 \right\rbrace}
\newcommand{\SpanNoBracket}[1]{\mathrm{Span} #1 }
\newcommand{\Spectrum}[1]{\mathrm{Spec}\left( #1 \right)}
\newcommand{\SpectrumNoBracket}[1]{\mathrm{Spec} #1 }

\def\id{\mathds{1}}

\newcommand{\ket}[1]{\ensuremath{\left|#1\right\rangle}}
\newcommand{\braket}[2]{\langle #1  |#2\rangle}
\newcommand{\ketbra}[2]{|#1\rangle\langle #2|  }
\newcommand{\sandwich}[3]{\langle #1|#2 |#3\rangle  }

\newcommand{\abs}[1]{\left\lvert#1\right\rvert}

\newcommand{\cH}{\mathcal{H}}

\newcommand{\cA}{\mathcal{A}}

\newcommand{\cK}{\mathcal{K}}
\newcommand{\cG}{\mathcal{G}}

\newcommand{\cQ}{\mathcal{Q}}
\newcommand{\thinS}{\mathscr{S}}
\newcommand{\fatS}{\pmb{\mathscr{S}}}

\newcommand{\familyH}{\mathcal{F}}
\newcommand{\latticeL}{\Lambda(L)}
\newcommand{\OBC}{\mathrm{OBC}}
\newcommand{\PBC}{\mathrm{PBC}}

\usepackage{xcolor}

\hypersetup{
  pdftitle    = {The bulk spectral gap is semi-decidable: a convergent family of certified upper bounds},
  pdfauthor   = {Xiangling Xu; Matthias Schötz; Jie Wang; Victor Magron; Igor Klep; Omar Fawzi; Marc-Olivier Renou},
  pdfsubject = {Quantum many-body physics; spectral gaps in the thermodynamic limit; semidefinite programming},
  pdfkeywords = {quantum many-body systems, spectral gap, bulk spectral gap, thermodynamic limit, semidefinite programming, SDP hierarchy, state polynomial optimization, KMS ground state, kagome lattice Heisenberg model}
  pdflang     = {en},
  pdfcreator  = {LaTeX with hyperref},
  pdfdisplaydoctitle = true
}

\title{\Large\bfseries
The bulk spectral gap is semi-decidable:\\
a convergent family of certified upper bounds
}

\author[1$\dagger$]{Xiangling Xu (\chinese{许湘灵})}
\author[2]{Matthias Sch\"{o}tz}
\author[3*]{Jie Wang (\chinese{王杰})}
\author[4]{Victor Magron}
\author[5,6]{Igor Klep}
\author[7]{Omar Fawzi}
\author[1]{Marc-Olivier Renou}

\affil[1]{Inria, CPHT, LIX, CNRS, \'Ecole Polytechnique, Institut Polytechnique de Paris, Palaiseau, France}
\affil[2]{Universität W\"{u}rzburg, Institute of Mathematics, Emil-Fischer-Stra{\ss}e 31, 97074 W\"{u}rzburg, Germany}
\affil[3]{State Key Laboratory of Mathematical Sciences, Academy of Mathematics and Systems Science,\protect\\ Chinese Academy of Sciences, Beijing, China}
\affil[4]{Universit\'e de Toulouse; LAAS-CNRS, 7 avenue du colonel Roche, F-31400 Toulouse, France}
\affil[5]{University of Ljubljana, Faculty of Mathematics and Physics, Jadranska 21, 1000 Ljubljana, Slovenia}
\affil[6]{University of Primorska, Faculty of Mathematics, Natural Sciences and Information Technologies,\protect\\ Glagolja\v{s}ka 8, 6000 Koper, Slovenia}
\affil[7]{Inria, ENS Lyon, UCBL, LIP, F-69342 Lyon Cedex 07, France}

\date{\vspace{-2.80em}}

\begin{document}
% \maketitle
%---------------------------------%
% Corresponding emails
\twocolumn[
\maketitle
\vspace{-1.70em}
\begin{center}
{\scriptsize
\(\dagger\)\,\texttt{xu.xiangling@inria.fr}
\qquad
\(*\)\,\texttt{wangjie212@amss.ac.cn}
}
\end{center}
\vspace{0.00em}

%---------------------------------%
% Abstract
\begin{center}
    \begin{minipage}{0.80\textwidth}
        \small
        \textbf{Abstract.}
        Determining spectral gaps in the thermodynamic limit is a central challenge in quantum many-body physics.
        Existing rigorous methods are largely limited to special settings, while variational numerical approaches typically provide estimates rather than certified bounds. 
        Here we introduce a complete family of certified upper bounds on the bulk spectral gap of quantum many-body systems.
        These upper bounds are obtained by solving a series of semidefinite programs and they become arbitrarily tight at the cost of more computational resources.
        This shows that the bulk spectral gap is semi-decidable, in contrast to undecidability results for alternative notions of spectral gap based on sequences of finite systems with prescribed boundary conditions.
        As a proof of principle, we apply our algorithm to the spin-$\frac{1}{2}$ kagome lattice Heisenberg antiferromagnet and obtain, to our knowledge, the first nontrivial certified upper bounds on its bulk spectral gap.
    \end{minipage}
\end{center}
\vspace{0.25em}

%---------------------------------%
% Popular Summary
% \begin{center}
%     \begin{minipage}{0.80\textwidth}
%         \small
%         \textbf{Popular Summary.}
%         The spectral gap is the energy needed to create the lowest excitation in a quantum many-body system and is linked to many key physical properties, including phase stability and critical behavior.
%         Yet determining it for an idealized infinite system is notoriously difficult: analytical methods apply to few models, while conventional numerical methods provide estimates rather than guarantees.
%         We introduce an algorithm producing certified upper bounds that converge to the exact gap value as computational resources increase.
%         As a proof of principle, we obtain the first nontrivial upper bounds for the spin-one-half kagome-lattice Heisenberg antiferromagnet, a widely studied frustrated quantum magnet whose gap remains debated.
%         A further conceptual consequence is that this infinite-system, or bulk, gap is semi-decidable: any proposed value that is too large is eventually ruled out by a finite computation.
%         We also explain why this conclusion does not contradict the well-known undecidability result for spectral gaps.
%     \end{minipage}
% \end{center}

\vspace{1em}
]
%---------------------------------%
% Introduction
The spectral gap, the energy difference between the ground state and the first excited state, is fundamental in quantum many-body physics.
A nonzero gap often implies exponential decay of correlations~\cite{hastings2007area} and phase stability~\cite{sachdev1999quantum,dennis2002topological,bravyi2010topological} while a vanishing gap is associated with criticality.
In quantum computation, the spectral gap governs the runtime of adiabatic quantum algorithms~\cite{albash2018adiabatic}.

Determining the spectral gap is notoriously difficult, and many open many-body-physics problems can be formulated as spectral gap problems. For example, the Haldane conjecture states that the antiferromagnetic Heisenberg model in 1D with integer spins is gapped~\cite{haldane1983nonlinear}.
Rigorous bounds on the spectral gap are confined to special settings, most notably frustration-free systems, tracing back to ideas developed for models such as the AKLT chain~\cite{affleck1987rigorous,affleck1988valence,knabe1988energy,fannes1992finitely,nachtergaele1996spectral,lemm2019finite,young2023quantum}.
Outside such structured settings, however, the ground states and low-energy excitations are not known in general.
In fact, from a computational point of view, the spectral gap in the thermodynamic limit is undecidable~\cite{cubitt2022undecidability,bausch2020undecidability}.

Hence, for general settings, one usually relies on variational numerical approaches, including tensor-network methods~\cite{schollwock2011density,orus2014practical}, variational Monte Carlo \cite{becca2017quantum}, and neural states Ans\"atze~\cite{carleo2017solving}.
These have achieved remarkable success, but provide estimates rather than rigorous certificates.
A paradigmatic example is the spin-$\frac{1}{2}$ kagome lattice Heisenberg antiferromagnet (cf.~\cref{fig:MaintextKagomeLattice}), a canonical frustrated model whose thermodynamic-limit gap remains debated despite extensive numerical work, with competing results supporting both gapped and gapless scenarios~\cite{zhu2025quantum}.

A missing central ingredient is a general method for obtaining \emph{rigorous upper bounds} on the spectral gap.
Lower bounds certify gappedness, whereas upper bounds rule out proposed finite gaps and provide a route to proving gaplessness when they can be driven to zero.
Such bounds are known in structured settings, for example from symmetry-based constraints such as Lieb--Schultz--Mattis-type theorems~\cite{hastings2004lieb} and from Goldstone-type arguments~\cite{nachtergaele2006quantum}.
Alternatively, in special models such as the AKLT chain and related spin-1 valence-bond-solid chains, rigorous upper bounds can also be obtained variationally from explicit low-energy trial states~\cite{knabe1988energy,arovas1988extended}.
Outside these special cases, however, a generally applicable certified framework is still lacking.

In this work, we introduce the \emph{first general algorithm} for rigorously upper bounding \emph{bulk spectral gaps} in the thermodynamic limit.
It produces a hierarchy of increasingly precise, but computationally more costly, upper bounds.
These upper bounds converge to the bulk spectral gap, the gap of the infinite system itself as probed by local excitations, independent of any prescribed boundary condition.
The same framework can also incorporate prescribed symmetries as additional constraints, yielding certified bounds within the corresponding symmetry sector.
As a proof of principle, we apply the method to the spin-$\frac{1}{2}$ kagome lattice Heisenberg antiferromagnet.
We obtain a fully general certified upper bound $2.18J$, and a sharper bound $1.15J$ in a physically motivated symmetry sector.
Although these bounds are far above existing numerical estimates---from gaps of order $10^{-2}$ to $10^{-1}J$ to possibly gapless scenarios~\cite{yan2011spin,depenbrock2012nature,he2017signatures}---they are, to our knowledge, the first rigorously certified upper bounds on this bulk spectral gap.

Our algorithm is built upon semidefinite programs (SDPs), which have proved powerful for certified results in quantum information theory~\cite{navascues2008convergent,pironio2010convergent} and, more recently, many-body physics~\cite{wang2024certifying,araujo2023first,wang2026scalable,fawzi2024certified,mortimer2025certifying,araujo2025differential}.
These existing applications rely on standard SDP constraints that are linear in the state.
By contrast, many important many-body properties, including spectral gaps, are naturally described by nonlinear constraints and are therefore not accessible to standard SDP methods in a direct way.
The frustration-free case is special: the lower-bound problem admits positivity conditions that can be relaxed into a standard SDP hierarchy~\cite{rai2026hierarchyofspectral}, although completeness remains open.
Our work takes a different route: for general finite-range interacting systems, we develop a new method to translate a nonlinear formulation of the spectral gap into certified SDP relaxations, and we prove that this translation yields a complete characterization of the bulk spectral gap problem in the thermodynamic limit.

More broadly, our work opens practical and theoretical directions.
On the practical side, stronger structural reductions, better exploitation of sparsity, and larger-scale computations could substantially sharpen the present algorithm and turn it into a competitive tool for difficult quantum many-body models.
Even when numerically larger than the best variational estimates, such
upper bounds are valuable because the framework provides rigorous
thermodynamic-limit certificates, rather than numerical evidence that may
change with improved Ansätze or larger simulations.
On the theoretical side, our results raise a natural question about lower-bound certification for bulk spectral gaps.
The known undecidability results concern finite-volume gap notions defined with fixed boundary conditions~\cite{cubitt2022undecidability,bausch2020undecidability}.
We show that our method, formulated directly in the bulk, is compatible with those constructions because it addresses a different notion of spectral gaps.
This leaves open whether the bulk spectral gap problem is itself undecidable, or whether it admits a complementary certification method for lower bounds.

\begin{figure}
    \centering
    \includegraphics[width=0.65\linewidth]{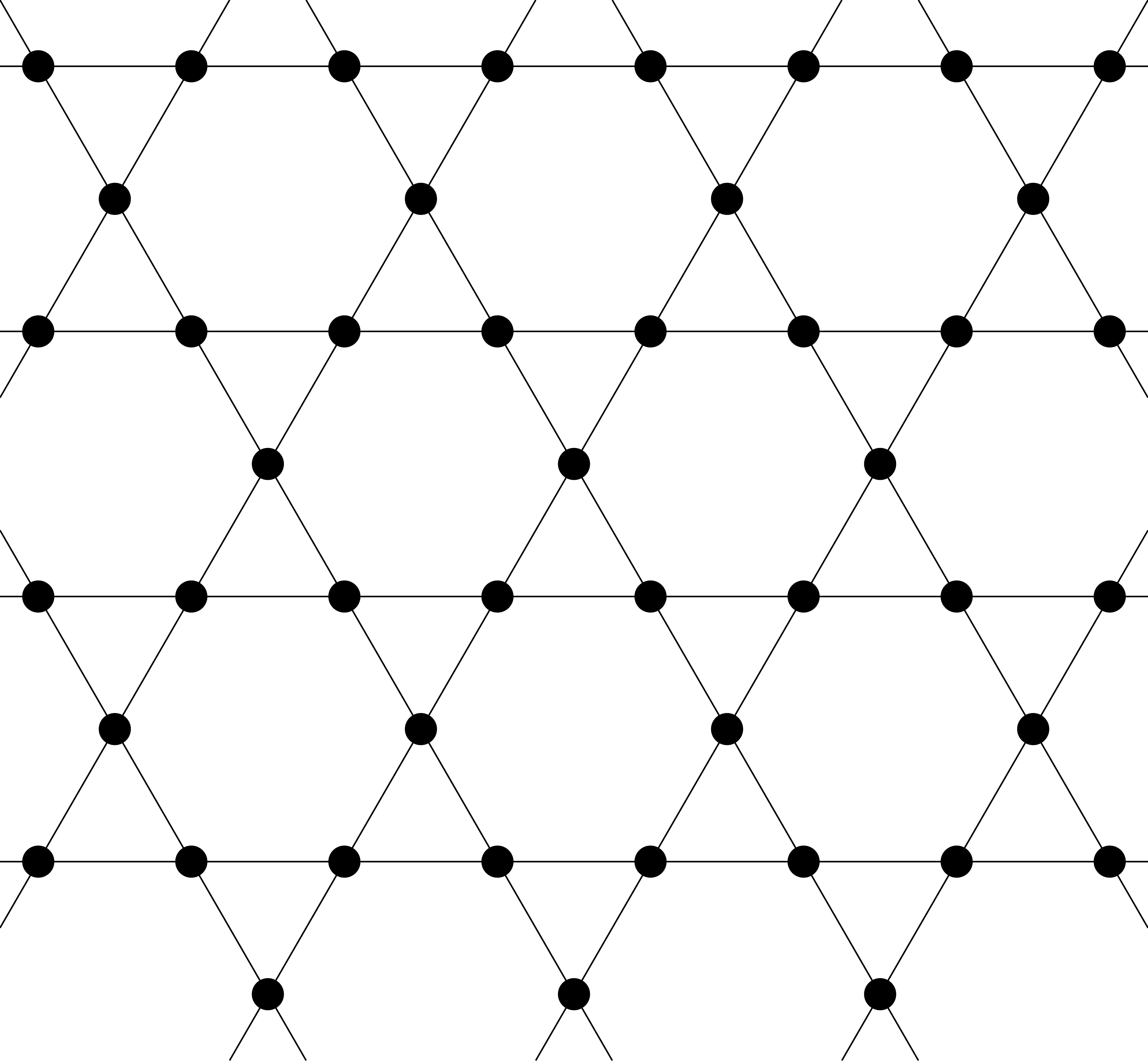}
    \caption{A portion of the kagome lattice. From ``Kagome-lattice-bw.svg'' by WilliamSix (Wikimedia Commons), \href{https://creativecommons.org/licenses/by-sa/4.0/}{CC BY-SA 4.0}.}
    \label{fig:MaintextKagomeLattice}
\end{figure}

%---------------------------------%
% Main Body
\paragraph{Scope}
For clarity, we present the main-text construction for a one-dimensional spin-$\frac{1}{2}$ chain.
The framework extends to finite-range systems on arbitrary-dimensional, non-regular lattices, including the kagome lattice, and more general systems with appropriate operator-algebraic structures; see Supplementary Information \cref{rem:ScopeOfConstruction}.

\paragraph{Finite-volume spectral gap with boundary conditions}
Our work focuses on the bulk spectral gap, in contrast to the conventional finite-volume notion with prescribed boundary conditions that we introduce first; see Supplementary Information \cref{sec:FiniteVolumeSpectralGap}.
Consider a nearest-neighbour Hamiltonian
\begin{align}
    H = \sum_{i \in \mathds{Z}} h_{i,i+1},
\end{align}
where each local term $h_{i,i+1}$ acts on two neighbouring sites.
Since this formal infinite sum is not a well-defined operator, one instead studies truncated Hamiltonians $H_B^{\latticeL}$ on intervals $\latticeL\coloneqq\{-L,\dots,L\}$, with boundary condition $B$ (e.g., open or periodic).

The finite-volume gap is then read from the spectrum of $H_B^{\latticeL}$, and its behavior as $L\to\infty$ is taken as the thermodynamic-limit notion of spectral gaps.
One (very strong) version, used for example in~\cite{cubitt2022undecidability}, is a \emph{non-degenerate uniform gap}: for all sufficiently large $L$, the finite-volume ground state is unique and the gap above it remains bounded below by a positive constant independent of $L$.

This definition is generally sensitive to boundary conditions: in the AKLT chain, for example, periodic boundary conditions give a unique finite-volume ground state, while four-fold degenerate for open boundary conditions~\cite{affleck1988valence}.
Yet, the thermodynamic limit is meant to capture bulk properties of very large systems, so one would like a notion of spectral gaps that reflects intrinsic bulk excitations rather than artifacts of how the boundary is imposed.
This motivates a formulation that captures the bulk behavior directly.

\paragraph{Bulk spectral gap in the thermodynamic limit} 
In order to access the bulk property independent of chosen boundary conditions, we take the formulation rooted in the thermodynamic limit~\cite{BratteliRobinsonVol2}; see Supplementary Information \cref{sec:AlgebraicFormalismMainSec} for details.

Suppose that we have a state $\rho$ on the infinite chain in the thermodynamic limit.
On the infinite chain, physically meaningful excitations are local observables, supported on finite intervals $\latticeL$.
Concretely, for a spin-$\frac{1}{2}$ chain, these are generated by finite products of local Pauli operators acting on $\latticeL$.
The set of all such local observables forms the local algebra $\cA_{\mathrm{loc}}$.
A state is then described through its expectation values on these observables, 
\begin{align}
    \omega(a) \coloneqq \mathrm{Tr}(\rho \, a),
\end{align}
which defines a positive normalized linear functional on $\cA_{\mathrm{loc}}$, and we also identify this functional $\omega$ with the state $\rho$.
(More precisely, this expectation-value functional $\omega$ is defined on the norm closure of $\cA_{\mathrm{loc}}$.)

Although the formal Hamiltonian $H=\sum_{i} h_{i,i+1}$ is not itself a well-defined finitely supported operator, its commutator with any local observable $a$ is well defined: only those interaction terms near the support of $a$ fail to commute with it, so $[H,a]$ reduces to a finite sum.
This gives a well-defined notion of energy change by local excitations and leads to the appropriate notion of infinite-volume ground state: $\omega$ is a \emph{Kubo--Martin--Schwinger (KMS) ground state} when
\begin{align}\label{eq:MainTextKMSGroundState}
    \omega(a^*[H,a]) \geq 0
\end{align}
for all local observables $a\in \cA_{\mathrm{loc}}$.
Physically, this says that no finitely supported excitation $a$ can lower the energy.
In finite dimensions, this is equivalent to the usual spectral notion of ground states; see Supplementary Information \cref{rem:FiniteDimEquivalence}.
In the thermodynamic limit, it provides the corresponding bulk notion of ground states, independent of any particular choice of boundary conditions.

For a KMS ground state $\omega$, we say that $\omega$ has a \emph{locally non-degenerate bulk spectral gap of size at least $\gamma>0$} if
\begin{align}\label{eq:MainTextNonDegenSpectralGap}
    \omega(a^*[H,a]) \geq \gamma\big(\omega(a^*a)-|\omega(a)|^2\big)
\end{align}
for all local observables $a \in \cA_{\mathrm{loc}}$~\cite{nachtergaele2024stability}.
(By the Cauchy--Schwarz inequality for states, \cref{eq:MainTextNonDegenSpectralGap} in particular implies \cref{eq:MainTextKMSGroundState}.)
For $\omega$, the left-hand side of \cref{eq:MainTextNonDegenSpectralGap} is again the energy increase created by the local excitation $a$.
Local degeneracy would mean that some local operator maps the chosen ground state $\omega$ to another orthogonal ground state.
For such an operator, the energy increase is zero, while $\omega(a^*a)-|\omega(a)|^2$ is positive.
Hence \cref{eq:MainTextNonDegenSpectralGap} could hold only with $\gamma=0$.
The local non-degeneracy condition excludes precisely this possibility: in that case, \cref{eq:MainTextNonDegenSpectralGap} says that every local excitation acting nontrivially on the chosen $\omega$ must pay at least $\gamma$ units of energy after normalization.
This should be distinguished from global degeneracy between distinct symmetry-broken or topological sectors, which need not be connected by local observables; see Supplementary Information \cref{rem:NonDegeneracyDifferent}.

The \emph{bulk spectral gap of $H$} is then the supremum of all $\gamma$ for which there exists a KMS ground state $\omega$ satisfying \cref{eq:MainTextNonDegenSpectralGap}.
This gap is defined directly in the thermodynamic limit through local excitations and is independent of any prescribed boundaries.
For example, finite-volume AKLT chains with open boundary conditions have boundary-induced ground-state degeneracy, while it is gapped in the bulk~\cite{affleck1988valence}.

\paragraph{A complete hierarchy of relaxations}
The dynamical criterion \cref{eq:MainTextNonDegenSpectralGap} is the starting point of our complete SDP hierarchy; see Supplementary Information \cref{sec:CompleteSDPHierarchyMainSec} for the full formulation and proof.

Suppose first that there exists a KMS ground state $\omega$ with a locally non-degenerate bulk spectral gap at least $\gamma$.
Then \cref{eq:MainTextNonDegenSpectralGap} must hold for every local excitation $a$.
In particular, it must hold for all excitations supported inside any fixed finite interval $\latticeL$.
Thus, every finite interval $\latticeL$ gives a necessary consistency check for such a gapped KMS ground state.

This leads to a hierarchy of tests: for each $L$, we ask whether there exists a consistent assignment of expectation values to all excitations supported on $\latticeL$ satisfying \cref{eq:MainTextNonDegenSpectralGap}.
If not for some $L$, then by contraposition, no KMS ground state with a locally non-degenerate bulk gap at least $\gamma$ can exist.
Thus, the hierarchy is a nested family of necessary conditions for the existence of a KMS ground state with a locally non-degenerate bulk gap of at least $\gamma$.

Although each test is formulated on a finite interval $\latticeL$, it is \emph{not} a finite-volume calculation.
We do not diagonalize a truncated Hamiltonian on $\latticeL$, nor do we extrapolate properties of a finite system with a prescribed boundary condition.
Instead, each interval $\latticeL$ is only a local window to test \cref{eq:MainTextNonDegenSpectralGap}.
The environment outside $\latticeL$ is not fixed in advance but is left completely free, so the construction effectively ranges over \emph{all} compatible boundary conditions at once.
This is why infeasibility at some finite $\latticeL$ gives a certified upper bound in the thermodynamic limit, rather than a mere finite-volume estimation.

For each fixed interval $\latticeL$, this test becomes a finite-dimensional feasibility problem.
That is, one asks whether there exists a consistent assignment of expectation values to all local excitations on $\latticeL$ such that the bulk-gap inequality is satisfied.
Existing works~\cite{araujo2023first,wang2024certifying,fawzi2024certified} already use a similar local-consistency-check idea to, e.g., certify KMS ground states with \cref{eq:MainTextKMSGroundState}.
In those settings, the relevant constraints are linear in the functional $\omega$ and therefore the problem can be naturally cast into standard SDP formulations.
Here, by contrast, \cref{eq:MainTextNonDegenSpectralGap} contains a nonlinear term $|\omega(a)|^2$, which is the major obstacle when one seeks to treat it with SDP methods.

To overcome this difficulty, we build on state polynomial optimization~\cite{klep2024state}, further extending it so that \cref{eq:MainTextNonDegenSpectralGap} becomes an SDP constraint without changing the underlying thermodynamic-limit problem.
In practice, for fixed $\latticeL$, checking this inequality against all local excitations is not computationally viable.
Thus, the algorithm is indexed by two parameters: $L$, which controls the size of the local test window, and $d$, which controls the class of local excitations tested within that window; see Supplementary Information \cref{def:HierarchyOfRelaxations,def:GappedRelaxHierarchyStatePoly} for a precise definition.

Our central result is that the resulting algorithm, implemented through SDPs indexed by $(L,d)$, is not merely a sequence of necessary tests, but a \emph{complete} certification procedure; see Supplementary Information \cref{thm:CompletenessSDPHierarchy}.
Namely, infeasibility at some finite level $(L,d)$ rigorously proves that no KMS ground state can have a locally non-degenerate bulk gap above the chosen threshold $\gamma$.
Conversely, if the algorithm remains feasible at every level $(L,d)$, then there exists a corresponding KMS ground state in the thermodynamic limit with a locally non-degenerate bulk spectral gap of at least $\gamma$.
More generally, the algorithm also bounds expectation values of local observables over KMS ground states with locally non-degenerate bulk gap at least $\gamma$.
Symmetry constraints can also be imposed: the same supremum is then restricted to symmetric KMS ground states, yielding a symmetry-restricted bulk gap; see Supplementary Information \cref{rem:SymmetryRestrictedHierarchy}.

\paragraph{Numerical showcasing}
From a numerical perspective, our algorithm rigorously certifies upper bounds on the bulk spectral gap of $H$, or on the corresponding symmetry-restricted bulk gap when symmetries are imposed.
At each $(L,d)$, the largest feasible value of $\gamma$ is such an upper bound: infeasibility above it certifies that no KMS ground state can satisfy \cref{eq:MainTextNonDegenSpectralGap} at that threshold.
As the relaxation is tightened, these certified bounds decrease monotonically and become sharper at a higher computational cost.
By completeness, arbitrary precision is achievable in principle.

\begin{figure}
    \centering
    \includegraphics[width=0.86\linewidth]{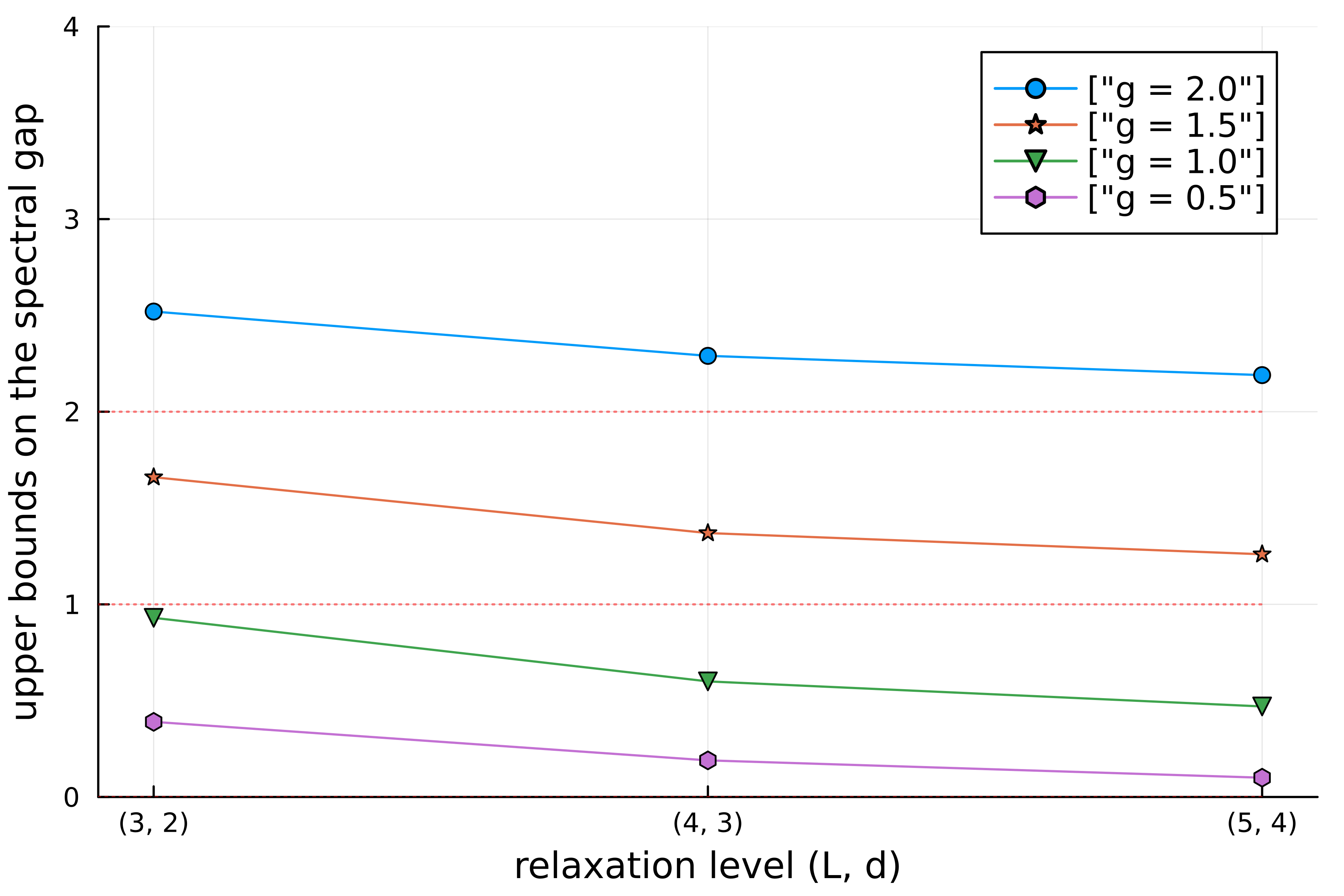}
    \caption{\emph{Certified upper bounds for the transverse-field Ising chain with spin-flip symmetry imposed.}
    Largest feasible $\gamma$ for $g=0.5,1.0,1.5,2.0$ along the relaxation levels $(L,d)=(n+1,n)$, $n = 2,3,4$. As the relaxations are nested, these certified upper bounds decrease monotonically in $L$ and $d$. In the disordered phase ($g=1.5, 2.0$) the bounds approach the known gap value $2(g-1)$ from above. At criticality $(g = 1.0)$ the bound decays slowly, possibly reflecting divergent correlation length. For the ordered phase $(g=0.5)$, the sign-flip symmetry selects the symmetric mixed ground state whose local bulk gap vanishes, and our bounds approach zero. Further calculations with different $d$ are reported in Supplementary Information \cref{tab:IsingNumericalBounds}: the best bounds obtained for $g=0.5,1.0,1.5,2.0$ are $0.10,0.45,1.17,2.11$, respectively.}
    \label{fig:MainTextIsingResult}
\end{figure}
As a proof of principle, we first test the hierarchy on the transverse-field Ising chain with spin-flip symmetry imposed, where the resulting behavior matches the known analytical results; see \cref{fig:MainTextIsingResult} or Supplementary Information \cref{sec:TransverseFieldIsing} for more details.

We then apply the method to the spin-$\frac{1}{2}$ kagome lattice Heisenberg antiferromagnet, a canonical frustrated model in the study of quantum spin liquids.
For this, the thermodynamic-limit gap remains under active debate: early density matrix renormalization group (DMRG) studies reported evidence compatible with a gapped ground state, with estimated singlet and spin gaps of about $0.04$--$0.05J$ and $0.13J$~\cite{yan2011spin,depenbrock2012nature}, whereas later DMRG studies on infinitely long cylinders and tensor-network approaches argued for gapless behavior~\cite{he2017signatures,liao2017gapless,jiang2019competing}.
Exact diagonalization on clusters of up to $48$ sites has not resolved the issue conclusively~\cite{lauchli2016s}.
These approaches provide valuable numerical evidence, but they do not by themselves constitute certified proofs for the thermodynamic-limit bulk gap.

Within our framework, we obtain certified upper bounds on the bulk spectral gap of the KLHM in the thermodynamic limit (\cref{fig:MainTextKagomeResult}).
The tightest fully general upper bound is $2.18J$.
We also test two symmetry sectors: one imposing only the three $\pi$-rotation spin symmetries, which allows chirality, and one additionally imposing finite spin-isotropy and time-reversal symmetry, where the tightest bound is $1.15J$.
The algorithm therefore already provides rigorous thermodynamic-limit certificates beyond elementary analytic estimates and conventional finite-size or variational methods; see Supplementary Information \cref{sec:KagomeHeisenberg} for details.

\begin{figure}
    \centering
    \includegraphics[width=0.86\linewidth]{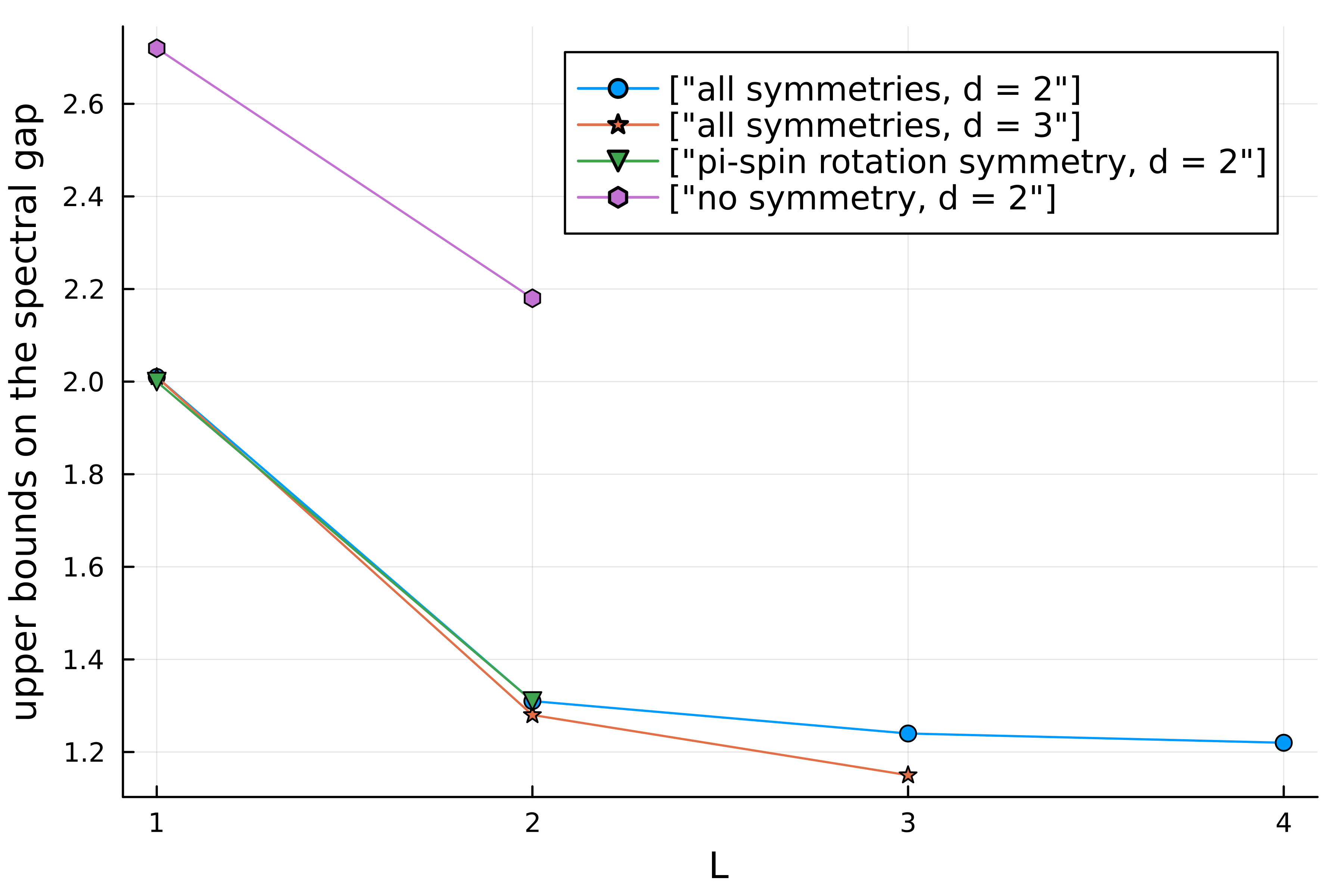}
    \caption{
    \emph{Certified upper bounds for the spin-$\frac{1}{2}$ kagome lattice Heisenberg antiferromagnet.}
    Largest feasible gap parameter versus patch radius $L$, where $L$ is the graph distance from a chosen reference site.
    We show four cases: the hierarchy with no symmetry restriction at $d=2$ for $L=1,2$; the hierarchy with $\pi$-rotation spin symmetries at $d=2$ for $L=1,2$; the same symmetries plus finite spin-isotropy and time-reversal symmetry at $d=2$ for $L=1,2,3,4$; and the latter symmetry sector at $d=3$ for $L=1,2,3$.
    The corresponding kagome patches contain $5$, $13$, $27$, and $45$ sites for $L=1,2,3,4$, respectively.
    The tightest general certified upper bound on the bulk spectral gap is $2.18J$, while the tightest certified upper bound in the symmetry-restricted sectors is $1.15J$.
    Exact values and definitions are given in Supplementary Information \cref{sec:KagomeHeisenberg,tab:KagomeNumericalBounds}.
    }
    \label{fig:MainTextKagomeResult}
\end{figure}

\paragraph{Further discussions}
Our current bounds on the kagome model are proof-of-principle rather than an optimized kagome-specific computation.
This points to a practical direction for future work: stronger structural reductions, better exploitation of symmetries (such as those in~\cite{wang2026scalable}), together with increased computational resources, may substantially sharpen the bounds and turn the method into a competitive tool for difficult models where traditional approaches remain inconclusive. 
Our bounds are computed using an SDP solver implemented in floating-point arithmetic and we did not take into account numerical errors. However, by extending the rigorous post-processing approach of~\cite{naceur2025certifiedboundsoptimizationproblems} to our optimization problem, it would be possible to obtain exact rational bounds at a reasonable computational cost.

Moreover, improved certified upper bounds can have qualitative implications: if such a bound lies below the energy of a proposed excitation, then that excitation is rigorously excluded from being the lowest bulk excitation.
Thus, even though the numerical value of a certified upper bound might be larger than traditional variational estimates, it provides rigorous information about the thermodynamic limit.
Beyond spectral gaps, our construction suggests that combining operator-algebraic formulations with state polynomial optimization may provide a systematic route to certified hierarchies for other difficult thermodynamic-limit problems involving constraints nonlinear in the state.

On the theoretical side, a natural complementary question is whether one can also certify \emph{lower} bounds on bulk spectral gaps.
At first sight, the known undecidability results for spectral gaps~\cite{cubitt2022undecidability,bausch2020undecidability} seem to obstruct such a possibility.
However, they concern finite-volume gaps with prescribed boundary conditions, whereas our method addresses KMS bulk gappedness directly in the thermodynamic limit.
This distinction is essential: finite-volume spectral data with fixed boundary conditions need not provide a reliable algorithmic route to deciding gappedness in the worst case scenarios.
By contrast, our upper-bound hierarchy shows that the bulk formulation has a semi-decidable side: if $\Delta_{\mathrm{bulk}}$ denotes the actual bulk gap, then every threshold strictly larger than $\Delta_{\mathrm{bulk}}$ can be eventually ruled out by our algorithm.
Thus, the two settings lead to different algorithmic questions rather than contradictory conclusions.
In particular, our results imply that boundary excitations play an essential role in the construction of~\cite{cubitt2022undecidability}; see Supplementary Information \cref{prop:CPW15GappedInDiffBoundaryCondition}.
This leaves open whether the bulk spectral gap problem itself is undecidable, or instead admits a complementary lower-bound certification method.
We refer to Supplementary Information \cref{sec:CPWResultRevisit,sec:BauResultOpenQuestion} for more details.
More generally, the distinction between these two gap notions illustrates a caution about mathematical idealizations of the physical world: different thermodynamic-limit idealizations can lead to genuinely different algorithmic questions.

%---------------------------------%
\subsection*{Code availability}
The code used to generate the numerical results and figures in this work is available at \url{https://github.com/wangjie212/SpectralGap}.

%---------------------------------%
% Acknowledgments
\subsection*{Acknowledgments}
We thank Antonio Ac\'in, Ignacio Cirac, Toby Cubitt, Hamza Fawzi, Tommaso Guaita, Lauritz van Luijk, Gr\'egoire Misguich, Miguel Navascu\'es, James Purcell, Robert Salzmann, Lucas Tendick, Thomas Vidick, and James Watson for the helpful discussions.
This work was performed within the project COMPUTE, which is funded within the QuantERA II Programme that has received funding from the EU's H2020 research and innovation programme under the GA No.~101017733 {\normalsize\euflag}. XX and MOR acknowledge funding by the ANR for the JCJC grants LINKS (No. ANR-23-CE47-0003) and the T-ERC QNET (No. ANR-24-ERCS-0008). JW was also supported by National Key R\&D Program of China under grant No.~2023YFA1009401, Natural Science Foundation of China (NSFC) under grant No.~12571333.
VM was also supported by the HORIZON–MSCA-2023-DN-JD of the European Commission under the Grant Agreement No.~101120296 (TENORS), the AI Interdisciplinary Institute ANITI funding through the French ``France 2030'' program under the Grant agreement No.~ANR-23-IACL-0002, as well as the National Research Foundation, Prime Minister’s Office, Singapore under its Campus for Research Excellence and Technological Enterprise (CREATE) programme. IK also acknowledges support of the Slovenian Research Agency program P1-0222 and grants J1-50002, N1-0217, J1-60011, J1-50001, J1-3004 and J1-60025. 
Partially supported by the Fondation de l’\'Ecole polytechnique
as part of the Gaspard Monge Visiting Professor Program, IK and MS thank \'Ecole polytechnique and Inria Paris Saclay for hospitality during the preparation of this manuscript. OF acknowledges funding by the European Research Council (ERC Grant AlgoQIP, Agreement No. 851716).

%---------------------------------%
% Bibliography
\begin{sloppypar}
\printbibliography
\end{sloppypar}

%---------------------------------%
% Supplementary information premeable
\clearpage
\newgeometry{margin=1in}
\onecolumn
\appendix

% \leftlinenumbers*   % stop switching; put appendix numbers on the left

% SI heading
\begin{center}
    {\LARGE\bfseries Supplementary Information\par}
    \vspace{0.8em}
\end{center}

\setcounter{section}{0}
\setcounter{subsection}{0}
\setcounter{equation}{0}
\setcounter{figure}{0}
\setcounter{table}{0}

\renewcommand{\thesection}{\arabic{section}}
\renewcommand{\thesubsection}{\thesection.\arabic{subsection}}
\renewcommand{\theequation}{S\arabic{equation}}
\renewcommand{\thefigure}{S\arabic{figure}}
\renewcommand{\thetable}{S\arabic{table}}
%---------------------------------%
% Supplemntary Information start here
We give an outline of this supplementary material, emphasizing where the main definitions and technical results enter.
In \cref{sec:AlgebraicFormalismMainSec} we introduce the operator-algebraic framework for many-body systems in the thermodynamic limit, including dynamical systems (\cref{sec:AlgebraicDynamicalSystem}) and KMS ground states (\cref{sec:AlgebraicGroundState}).
We then define the state-dependent locally non-degenerate bulk gap and the corresponding system-level bulk spectral gap (\cref{sec:AlgebraicSpectralGap}), and record several consequences used later.
In \cref{sec:CompleteSDPHierarchyMainSec} we construct our algorithm via a SDP hierarchy: first at a high level (\cref{sec:ConvexRelaxationHierarchy}), then in the state-polynomial framework, where we prove completeness (\cref{sec:ConvexRelaxationHierarchyCompleteness}).
In \cref{sec:NumericalShowcase} we use the hierarchy to compute certified upper bounds and report numerical case studies for the transverse-field Ising model (\cref{sec:TransverseFieldIsing}) and the kagome lattice Heisenberg model (\cref{sec:KagomeHeisenberg}).
Finally, in \cref{sec:DiscussionSemiDecidability} we compare bulk spectral gaps with finite-volume gap notions under fixed boundary conditions (\cref{sec:FiniteVolumeSpectralGap}), revisit the construction of~\cite{cubitt2022undecidability} and show that boundary excitations play an essential role there (\cref{sec:CPWResultRevisit}), and explain why bulk-gap undecidability remains open despite~\cite{bausch2020undecidability} (\cref{sec:BauResultOpenQuestion}).

\section{Operator algebraic formalism for spectral gaps in the thermodynamic limit}\label{sec:AlgebraicFormalismMainSec}
The theory of operator algebras provides a natural framework for studying bulk behavior directly in the thermodynamic limit~\cite{BratteliRobinsonVol2}; see also~\cite{young2023quantum} for a high-level introduction.
In this section we recall the minimal formalism needed for our construction.
After defining the quasi-local algebra and its dynamics (\cref{sec:AlgebraicDynamicalSystem}), we introduce KMS ground states and their GNS bulk Hamiltonians (\cref{sec:AlgebraicGroundState}).
We then define the state-dependent locally non-degenerate bulk gap, give its equivalent dynamical criterion, identify its relationship with purity/extremity, and define the associated system-level bulk spectral gap (\cref{sec:AlgebraicSpectralGap}).

\subsection{Thermodynamic limit and dynamical systems with operator algebras}\label{sec:AlgebraicDynamicalSystem}
We begin by introducing an algebraic formulation for thermodynamic limits, following~\cite[Chapter~6.2]{BratteliRobinsonVol2}.

Let us denote by $M_q(\mathds{C})$ the set of $q \times q$ matrices with complex entries. 
Let $\latticeL \coloneqq \{-L,\dots, L\}^D$ be the set of vertices of a $D$-dimensional lattice of side length $2L+1 \in \mathds{N}$. 
To each site $i \in \latticeL$ we associate a Hilbert space $\cH^{(i)} \simeq \mathds{C}^q$ and let $\cA_i = B(\cH^{(i)})\simeq M_q(\mathds{C})$ be the $C^*$-algebra of (bounded) observables acting on $\cH^{(i)}$. 
Thanks to finite dimensionality, we may assume that $\cA_i$ is generated by a finite set of self-adjoint operators $\{x_j^{(i)}\} \subset \cA_i$ and is bounded such that $M^{(i)}_j - (x^{(i)}_j)^2 \succeq 0$ for some fixed constant $M^{(i)}_j > 0$.
Similarly, for any finite subset $S \subset \latticeL$, let $\cA_S = \bigotimes_{i \in S} \cA_i$ with the generating set $\{x^{(i)}_j\}_{i \in S}$.
Then, the thermodynamic limit $\bigcup_{L}^{\infty} \latticeL$ can be modeled by an approximately finite-dimensional $C^*$-algebra of quasi-local observables
\begin{align} \label{eq:QuasiLocalCstarAlgebra}
    \cA \coloneqq \overline{\cA_{\mathrm{loc}}}, \quad \text{where }\cA_{\mathrm{loc}} \coloneqq \bigcup_{L \in \mathds{N}} \cA_{\latticeL},
\end{align}
and the closure is taken with respect to the operator norm.

With the observables of the system in the thermodynamic limit defined, we now introduce the dynamics.
For any finite subset $S \subset\Lambda$, the interaction $\Phi$ among the sites of $S$ is given by a self-adjoint matrix $\Phi(S) = \Phi(S)^* \in \cA_S$.
Fixing an interaction $\Phi$, the local Hamiltonian on the sublattice $\latticeL$ is given by
\begin{align}\label{eq:HamtiltonianGeneralDef}
    H^{\latticeL} = \sum_{S \subset \latticeL} \Phi(S) \in \cA_{\latticeL},
\end{align}
which is well-defined as a finite sum.
In our manuscript, we consider only \emph{finite-range interactions}: there exists an $l \geq 0$ such that $\Phi(S) \neq 0$ only if $\mathrm{diam}(S) \leq l$, where $\mathrm{diam}(S) \coloneqq \sup_{x,y\in S} d(x,y)$ denotes the diameter of $S$ in the lattice metric.

\begin{example}
    For example, with a translationally invariant nearest neighbour interaction, the local Hamiltonian of the transverse-field Ising model on the sublattice $\{-L, \dots, L\} \subset \mathds{Z}$ at some constant $g \geq 0$~\cite{mbeng2024quantum} is given by
    \begin{align}
        H_{\mathrm{TFIM}, g}^{\Lambda(L)} = - \sum_{i = -L}^{L-1} Z_{i} Z_{i+1} + g \sum_{i = -L}^{L} X_i.
    \end{align}
    where $Z_i, X_i \in \cA_i$ are the Pauli matrices acting on the site $i$.
\end{example}

While the thermodynamic limit of the observables of $\Lambda$ is straightforward, for the family $\familyH = \{H^{\latticeL}\}_L$, taking the norm limit $L\to\infty$ inside $\cA$ is, however, \emph{not} meaningful due to unboundedness.
Instead, we consider the \emph{Heisenberg dynamics} generated by the family $\familyH = \{H^{\latticeL}\}_L$.
For every local observable $a \in \cA_{\mathrm{loc}}$, the limit
\begin{align}\label{eq:DynamicFromHamiltonian}
\tau_{\familyH}^t(a) \coloneqq \lim_{L\to\infty} \mathrm{e}^{\mathrm{i} t H^{\latticeL}} a \mathrm{e}^{-\mathrm{i} t H^{\latticeL}},\quad t\in\mathbb R,
\end{align}
exists and defines a strongly-continuous one-parameter automorphism group on $\cA$ (thanks to the finite range interaction).
Furthermore, let $\delta_{\familyH}$ be the densely defined \emph{derivation} (or \emph{generator}) of the dynamics $\tau_{\familyH}$, which is a closed derivation whose dense domain contains $\cA_{\mathrm{loc}}$ as a core.
For any local observable $a \in \cA_{\latticeL}$, it satisfies
\begin{align}\label{eq:DerivationFromHamiltonian}
\delta_{\familyH}(a) = \mathrm{i} [H^{\latticeL}, a].
\end{align}

\begin{definition}\label{def:DynamicalSystem}
    Given the family of the local Hamiltonians $\familyH = \{H^{\latticeL}\}_L$ on the lattice $\Lambda = \bigcup_{L=1}^{\infty} \latticeL$, let $\tau_{\familyH}$ be the associated Heisenberg dynamics and $\delta_{\familyH}$ be its derivation.
    We call the tuple $(\cA, \tau_{\familyH}, \delta_{\familyH})$ a \emph{dynamical system induced by $\familyH$}.
\end{definition}

Crucially, with the operator algebraic formalism of the thermodynamic limit of many-body systems, we directly access the bulk properties of the systems without the need to fix any boundary conditions on the finite-size approximants.

\subsection{KMS ground states}\label{sec:AlgebraicGroundState}
We are ready to introduce a dynamical formulation of ground states, the Kubo–Martin–Schwinger (KMS) state~\cite[Chapter~5.3]{BratteliRobinsonVol2} at temperature $0$.

\begin{definition}[KMS ground state]\label{def:KMSGroundState}
Let $(\cA, \tau_{\familyH}, \delta_{\familyH})$ be a dynamical system induced by the local Hamiltonian family $\familyH = \{H^{\latticeL}\}_L$ and let $\omega: \cA \to \mathds{C}$ be a state.
Then, $\omega$ is a \emph{KMS ground state} for the family $\familyH$ if and only if
\begin{align}\label{eq:KMSGroundCondition}
- \mathrm{i} \omega(a^* \delta_{\familyH}(a)) \geq 0
\end{align}
for all $a$ in the domain of $\delta_{\familyH}$.
In particular, for $a \in \cA_{\mathrm{loc}}$, \cref{eq:KMSGroundCondition} simplifies to $\omega(a^*[H^{\latticeL},a])\geq 0$ for some $L$.
It also follows that $\omega$ is $\tau_{\familyH}$-invariant, i.e., $\omega(\tau_{\familyH}(a)) = \omega(a)$ for all $a \in \cA$ and $t \in \mathds{R}$.
\end{definition}

Note that \cref{eq:KMSGroundCondition} implicitly requires $-\mathrm{i}\omega(a^*\delta_{\familyH}(a))\in\mathds{R}$ for every $a$, so $\omega(a^*\delta_{\familyH}(a))$ is purely imaginary. 
Hence
\begin{equation}
    \begin{aligned}
        0
        &=
        \omega(a^*\delta_{\familyH}(a))
        +
        \overline{\omega(a^*\delta_{\familyH}(a))} \\
        &=
        \omega(a^*\delta_{\familyH}(a))
        +
        \omega(\delta_{\familyH}(a^*)a)
        =
        \omega(\delta_{\familyH}(a^*a)).
    \end{aligned}
\end{equation}
Since every element of a unital $*$-algebra is a linear combination of squares, it follows that every KMS ground state satisfies the stationary condition
\begin{align}\label{eq:StationaryCondition}
    \omega(\delta_{\familyH}(a))=0
\end{align}
for all $a$ in the domain of $\delta_{\familyH}$.
Equivalently, $\omega$ is stationary under the dynamics, $\omega\circ\tau_{\familyH}^t=\omega$, in agreement with~\cite[Lemma~5.3.16]{BratteliRobinsonVol2}.
Moreover,
\begin{align}\label{eq:KMSGroundLHSHermitian}
    -\mathrm{i}\omega(a^*\delta_{\familyH}(a))
    =
    -\frac{\mathrm{i}}{2}
    \omega\!\bigl(
        a^*\delta_{\familyH}(a)-\delta_{\familyH}(a^*)a
    \bigr).
\end{align}
This rewriting, where the right-hand side is hermitian, will be useful later in \cref{sec:ConvexRelaxationHierarchyCompleteness}.

\cref{eq:KMSGroundCondition} captures the intuition that any perturbation of $\omega$ by an observable must increase the energy of the system.
We note that the KMS ground state is generally not unique.
In fact, the set of all KMS ground states is weak-${*}$ compact and convex, and its extremal points are also pure states on $\cA$~\cite[Chapter~5.3.3]{BratteliRobinsonVol2}.
The following proposition/definition of \emph{bulk Hamiltonians} connects the dynamical formulation with the usual spectral definition of ground states. 
It relies on the Gelfand-Naimark-Segal (GNS) representation~\cite[Theorem~9.14]{Tak02}. 

\begin{proposition}[Bulk Hamiltonian]\label{prop:BulkHamiltonian}
Let $(\cA, \tau_{\familyH}, \delta_{\familyH})$ be a dynamical system induced by the local Hamiltonian family $\familyH = \{H^{\latticeL}\}_L$.
Let $\omega: \cA \to \mathds{C}$ be a KMS ground state for $\familyH$ and let $(\cH_{\omega},\pi_{\omega},\ket{\Omega_{\omega}})$ be its GNS representation.

Then there exists a unique positive self-adjoint (possibly unbounded) operator $H_{\omega}$ with a dense domain in $\cH_{\omega}$, called the \emph{bulk Hamiltonian} associated with $\omega$, such that
\begin{equation}\label{eq:BulkHamiltonianProperty}
    \begin{aligned}
        \pi_{\omega}(\tau_{\familyH}^t(a)) &= \mathrm{e}^{\mathrm{i} t H_{\omega}} \pi_{\omega}(a) \mathrm{e}^{-\mathrm{i} t H_{\omega}}, \\
        \mathrm{e}^{\mathrm{i} t H_{\omega}} \pi_{\omega}(a) \ket{\Omega_{\omega}} &= \pi_{\omega}(\tau_{\familyH}^t(a)) \ket{\Omega_{\omega}}, \\
        H_{\omega} \ket{\Omega_{\omega}} &= 0, \\
        \ H_{\omega} \pi_{\omega}(a) \ket{\Omega_{\omega}} &= - \mathrm{i} \pi_{\omega}(\delta_{\familyH}(a)) \ket{\Omega_{\omega}},
    \end{aligned}
\end{equation}
for all $t \in \mathds{R}$ and $a \in \cA$.
Consequently, $\Spectrum{H_{\omega}} \subset [0,\infty)$, and $\ket{\Omega_{\omega}}$ is a ground state for $H_{\omega}$ in the usual spectral sense.
\end{proposition}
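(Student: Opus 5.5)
The plan is the standard construction: implement the dynamics by a unitary group in the GNS space, then obtain $H_\omega$ from Stone's theorem. By \cref{eq:StationaryCondition} and the remark after \cref{def:KMSGroundState}, $\omega$ is $\tau_{\familyH}$-invariant, $\omega\circ\tau^t_{\familyH}=\omega$. On the dense subspace $\pi_\omega(\cA)\ket{\Omega_\omega}$ we define
\begin{align}
U(t)\pi_\omega(a)\ket{\Omega_\omega}\coloneqq \pi_\omega(\tau^t_{\familyH}(a))\ket{\Omega_\omega}.
\end{align}
This map is well defined and isometric, since $\|\pi_\omega(\tau^t(a))\Omega_\omega\|^2=\omega(\tau^t(a^*a))=\omega(a^*a)$. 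Because $\tau^t$ is an automorphism, $U(t)$ has dense range, so it extends to a unitary. The group law $U(t+s)=U(t)U(s)$ follows from that of $\tau$. Strong continuity follows from strong continuity of $\tau$ together with $\|\pi_\omega(b)\Omega_\omega\|\le\|b\|$ and density. A direct check on vectors $\pi_\omega(b)\Omega_\omega$ gives the covariance relation $U(t)\pi_\omega(a)U(t)^*=\pi_\omega(\tau^t(a))$.

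Stone's theorem then gives a unique self-adjoint $H_\omega$ with $U(t)=\mathrm e^{\mathrm i tH_\omega}$. This yields the first two lines of \cref{eq:BulkHamiltonianProperty}, and $H_\omega\Omega_\omega=0$ follows from $U(t)\Omega_\omega=\Omega_\omega$ (take $a=\id$). For $a$ in the domain of $\delta_{\familyH}$ we differentiate $t\mapsto \pi_\omega(\tau^t(a))\Omega_\omega$ at $t=0$. The difference quotient converges in norm to $\pi_\omega(\delta_{\familyH}(a))\Omega_\omega$. Hence $\pi_\omega(a)\Omega_\omega\in\mathrm{Dom}(H_\omega)$ and $\mathrm iH_\omega\pi_\omega(a)\Omega_\omega=\pi_\omega(\delta(a))\Omega_\omega$, which is the fourth line. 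The statement ``for all $a\in\cA$'' should be read with $a$ restricted to $\mathrm{Dom}(\delta_{\familyH})$ in the last identity.

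\textbf{Positivity.} This is the main step. For $a\in\mathrm{Dom}(\delta)$ we have
\begin{align}
\langle \pi_\omega(a)\Omega_\omega, H_\omega\pi_\omega(a)\Omega_\omega\rangle=-\mathrm i\,\omega(a^*\delta(a))\ge 0
\end{align}
by \cref{eq:KMSGroundCondition}. So $H_\omega$ is nonnegative on the subspace $D\coloneqq\pi_\omega(\mathrm{Dom}\,\delta)\Omega_\omega$.

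To pass to all of $\mathrm{Dom}(H_\omega)$, we argue that $D$ is a core for $H_\omega$. The set $\mathrm{Dom}(\delta)$ is dense in $\cA$, so $D$ is dense in $\cH_\omega$. It is also invariant under $U(t)$, because $\tau^t$ preserves $\mathrm{Dom}(\delta)$. By Nelson's core theorem (a dense, $U(t)$-invariant subspace of the generator's domain is a core), $D$ is a core. Nonnegativity of the quadratic form then extends by graph-norm closure, giving $\Spectrum{H_\omega}\subset[0,\infty)$.

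Since $H_\omega\Omega_\omega=0$ and $H_\omega\ge0$, the vector $\Omega_\omega$ minimizes the energy, i.e.\ it is a ground state in the spectral sense.

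\textbf{Uniqueness.} Any other positive self-adjoint $H'$ satisfying the second line generates the same unitary group $U(t)$ on the dense set $\pi_\omega(\cA)\Omega_\omega$. Uniqueness in Stone's theorem then forces $H'=H_\omega$.

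The only delicate points are the core argument and the domain bookkeeping. If Nelson's theorem is to be avoided, one can instead use $\cA_{\mathrm{loc}}$, which is a core for $\delta$ and is mapped into $\mathrm{Dom}(\delta)$ by the finite-range dynamics. One then approximates by graph-norm limits as in \cite[Prop.~5.3.19]{BratteliRobinsonVol2}.
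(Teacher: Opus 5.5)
Your proposal is correct and reconstructs the standard argument behind the paper's proof, which simply cites \cite[Proposition~5.3.19]{BratteliRobinsonVol2}. That argument runs through $\tau_{\familyH}$-invariance of $\omega$, the GNS unitary implementation $U(t)$ with its Stone generator $H_\omega$, the difference-quotient computation giving $\mathrm{i}H_\omega\pi_\omega(a)\Omega_\omega=\pi_\omega(\delta_{\familyH}(a))\Omega_\omega$, and positivity extended from the dense $U(t)$-invariant core $\pi_\omega(\mathrm{Dom}\,\delta_{\familyH})\Omega_\omega$. You are also right that the fourth identity must be read only for $a$ in the domain of $\delta_{\familyH}$, not for all $a\in\cA$.
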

\begin{proof}
    See~\cite[Proposition~5.3.19]{BratteliRobinsonVol2}.
\end{proof}

In other words, for any KMS ground state $\omega$ of $(\cA, \tau_{\familyH}, \delta_{\familyH})$, there exists a unique bulk Hamiltonian $H_{\omega}$ in the GNS representation that implements the dynamics induced by the family $\familyH = \{H^{\latticeL}\}_L$.
Moreover, the GNS cyclic state $\ket{\Omega_{\omega}}$ is the zero-energy ground state of this bulk Hamiltonian $H_{\omega}$ in the usual sense.

\begin{remark}[Connection to finite-dimensional system]\label{rem:FiniteDimEquivalence}
For a finite-dimensional quantum system $M_q(\mathds{C})$, the KMS ground state in \cref{def:KMSGroundState} for a Hamiltonian $H \in M_q(\mathds{C})$ coincides with the usual spectral ground state for $H$.
Indeed, every state $\omega$ on $M_d(\mathds{C})$ satisfies $\omega(\cdot) = \Tr{\rho\; \cdot}$ for some density operator $\rho$.
It is straightforward to check that every ground state for $H$ (as the eigenvector corresponding to the lowest eigenvalue) satisfies \cref{eq:KMSGroundCondition}.
Conversely, diagonalizing $H$ with its eigenvectors $\{\ket{E_i}\}$ and choosing $a_{ij} = \ketbra{E_i}{E_j}$ forces a KMS ground state to be supported only on the minimum eigenspace of $H$.

It follows that the KMS ground state can be regarded as a natural generalization of the usual definition of ground states to infinite systems.
\end{remark}

\subsection{Spectral gaps in the bulk}\label{sec:AlgebraicSpectralGap}
We now discuss the algebraic formalism of spectral gaps in the thermodynamic limit, which characterizes bulk excitations in many-body systems~\cite[Definition~3.2]{ogata2023classification}.

\begin{definition}[Locally non-degenerate bulk gap on states]\label{def:GappedAlgebraic}
    Let $(\cA, \tau_{\familyH}, \delta_{\familyH})$ be a dynamical system induced by the local Hamiltonian family $\familyH = \{H^{\latticeL}\}_L$.
    Let $\omega: \cA \to \mathds{C}$ be a KMS ground state for $\familyH$ and let $(\cH_{\omega},\pi_{\omega},\ket{\Omega_{\omega}})$ be its GNS representation with the bulk Hamiltonian $H_{\omega}$.
    
    We say that the system $(\cA, \tau_{\familyH}, \delta_{\familyH})$ with the state $\omega$ is \emph{locally non-degenerate bulk-gapped} if there exists a constant $\gamma > 0$ such that the bulk Hamiltonian $H_{\omega}$:
    \begin{enumerate}
        \item  is locally non-degenerate in the sense that
        \begin{align*}
            \ker(H_{\omega}) = \Span{\ket{\Omega_{\omega}}};
        \end{align*}
        \item satisfies the bulk spectral gap condition
        \begin{align*}
            \Spectrum{H_{\omega}} \cap (0, \gamma) = \emptyset.
        \end{align*}
    \end{enumerate}
    In this case, we say the system $(\cA, \tau_{\familyH}, \delta_{\familyH})$ with the state $\omega$ has a \emph{locally non-degenerate bulk spectral gap of at least $\gamma$}.
\end{definition}

There is a dynamical characterization of locally non-degenerate bulk-gapped systems (by substituting $A=a-\omega(a)\id$ in \cite[Eq.~(2.22)]{nachtergaele2024stability}), which plays a central role in \cref{sec:CompleteSDPHierarchyMainSec}.
\begin{proposition}[Dynamical criterion for locally non-degenerate bulk gap on ground states]\label{prop:GappedSystemProperty}
    Let $(\cA, \tau_{\familyH}, \delta_{\familyH})$ be a dynamical system induced by the local Hamiltonian family $\familyH = \{H^{\latticeL}\}_L$.
    Let $\omega: \cA \to \mathds{C}$ be a KMS ground state for $\familyH$ and let $(\cH_{\omega},\pi_{\omega},\ket{\Omega_{\omega}})$ be its GNS representation with the bulk Hamiltonian $H_{\omega}$.
    
    Then, the system $(\cA, \tau_{\familyH}, \delta_{\familyH})$ with the state $\omega$ has a locally non-degenerate bulk spectral gap of at least $\gamma$ if and only if
    \begin{align}\label{eq:GappedSystemProperty}
        -\mathrm{i} \omega(a^* \delta_{\familyH}(a))  \geq \gamma ( \omega(a^* a) - \abs{\omega(a)}^2 ).
    \end{align}
    for all $a$ in the domain of $\delta_{\familyH}$.
\end{proposition}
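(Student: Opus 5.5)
The plan is to translate both sides of the equivalence into statements about the quadratic form of $H_{\omega}$ on the dense set $\pi_{\omega}(\cA)\ket{\Omega_{\omega}}$, using \cref{prop:BulkHamiltonian}.

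For $a$ in the domain of $\delta_{\familyH}$, the last line of \cref{eq:BulkHamiltonianProperty} gives $H_{\omega}\pi_{\omega}(a)\ket{\Omega_{\omega}} = -\mathrm{i}\pi_{\omega}(\delta_{\familyH}(a))\ket{\Omega_{\omega}}$. Writing $\psi_a \coloneqq \pi_{\omega}(a)\ket{\Omega_{\omega}}$, this yields
\begin{align*}
-\mathrm{i}\,\omega(a^*\delta_{\familyH}(a)) = \langle \psi_a, H_{\omega}\psi_a\rangle .
\end{align*}
On the other side, let $P_{\Omega}$ be the projection onto $\ket{\Omega_{\omega}}$. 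Since $\langle\Omega_{\omega},\psi_a\rangle=\omega(a)$, we get $\omega(a^*a)-|\omega(a)|^2 = \|(\id-P_{\Omega})\psi_a\|^2$. Moreover $H_{\omega}\ket{\Omega_{\omega}}=0$ and $H_{\omega}$ is self-adjoint, so $\langle\psi_a,H_{\omega}\psi_a\rangle=\langle\phi_a,H_{\omega}\phi_a\rangle$ with $\phi_a=(\id-P_{\Omega})\psi_a=\pi_{\omega}(a-\omega(a)\id)\ket{\Omega_{\omega}}$. Hence \cref{eq:GappedSystemProperty} is equivalent to
\begin{align*}
\langle\phi,H_{\omega}\phi\rangle\geq\gamma\|\phi\|^2 \quad\text{for all } \phi\in \mathcal{D}_0\coloneqq (\id-P_{\Omega})\pi_{\omega}(\mathrm{dom}\,\delta_{\familyH})\ket{\Omega_{\omega}}.
\end{align*}

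For ($\Rightarrow$), by the spectral theorem, conditions 1 and 2 of \cref{def:GappedAlgebraic} say that the spectral measure of $H_{\omega}$ restricted to $\{\Omega_{\omega}\}^{\perp}$ lives on $[\gamma,\infty)$. Indeed, $\{0\}$ carries exactly $P_{\Omega}$ by condition 1, and $(0,\gamma)$ carries nothing by condition 2. The inequality then holds for every $\phi$ in the form domain orthogonal to $\Omega_{\omega}$, and in particular on $\mathcal{D}_0$.

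For ($\Leftarrow$), we need to extend the inequality from $\mathcal{D}_0$ to all of $\mathrm{dom}(H_{\omega})\cap\{\Omega_{\omega}\}^{\perp}$. This is the main obstacle: we need $\pi_{\omega}(\mathrm{dom}\,\delta_{\familyH})\ket{\Omega_{\omega}}$ to be a core for $H_{\omega}$. I would argue this as follows. That subspace is dense, since $\mathrm{dom}\,\delta_{\familyH}$ is norm-dense in $\cA$ and $\Omega_{\omega}$ is cyclic. By the second line of \cref{eq:BulkHamiltonianProperty} it is invariant under $\mathrm{e}^{\mathrm{i}tH_{\omega}}$, because $\tau^t$ preserves the domain of its generator. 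It lies in $\mathrm{dom}(H_{\omega})$ by the last line of \cref{eq:BulkHamiltonianProperty}. Nelson's invariant-domain theorem (a dense, unitary-group-invariant subspace of the domain is a core) then makes it a core. Projecting with $\id-P_{\Omega}$, which commutes with $H_{\omega}$, shows that $\mathcal{D}_0$ is a core for $H_{\omega}$ restricted to $\{\Omega_{\omega}\}^{\perp}$. By closure in the graph norm, the quadratic form bound $H_{\omega}\geq\gamma$ then holds on that whole subspace.

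A positive self-adjoint operator bounded below by $\gamma>0$ on $\{\Omega_{\omega}\}^{\perp}$ has spectrum in $[\gamma,\infty)$ there and trivial kernel there. Since $H_{\omega}$ commutes with $P_{\Omega}$, we conclude $\ker H_{\omega}=\Span{\ket{\Omega_{\omega}}}$ and $\Spectrum{H_{\omega}}\cap(0,\gamma)=\emptyset$, which are conditions 1 and 2.
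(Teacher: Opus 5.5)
Your proof is correct, and it takes a different route from the paper only in that the paper gives no argument of its own. The paper imports the characterization from \cite[Eq.~(2.22)]{nachtergaele2024stability} and makes the substitution $A=a-\omega(a)\id$. Under that substitution $\omega(A^*A)=\omega(a^*a)-\abs{\omega(a)}^2$, and the cross term $\overline{\omega(a)}\,\omega(\delta_{\familyH}(a))$ vanishes by stationarity. Your vector $\phi_a=\pi_\omega(a-\omega(a)\id)\ket{\Omega_\omega}$, together with the identity $\langle\psi_a,H_\omega\psi_a\rangle=\langle\phi_a,H_\omega\phi_a\rangle$ obtained from $H_\omega\ket{\Omega_\omega}=0$, is exactly that substitution written in the GNS space.

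The rest of your argument supplies the analytic content that the citation hides. For the forward direction you use the spectral theorem. For the converse you show that $\pi_\omega(\mathrm{dom}\,\delta_{\familyH})\ket{\Omega_\omega}$ is a dense, $\mathrm{e}^{\mathrm{i}tH_\omega}$-invariant subspace of $\mathrm{dom}(H_\omega)$, hence a core, and then pass to $\{\ket{\Omega_\omega}\}^\perp$ using the projection that commutes with $H_\omega$.

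What your route buys is a self-contained proof that shows where each hypothesis enters. The core property is the only genuinely nontrivial step. The paper also relies on a core statement without proof, for $\pi_\omega(\cA_{\mathrm{loc}})\ket{\Omega_\omega}$, in \cref{prop:NonPureKMSGroundStateDegenerate}; there $\tau_{\familyH}^t$-invariance is unavailable, so one must go through $\cA_{\mathrm{loc}}$ being a core for $\delta_{\familyH}$. The condition $\gamma>0$ is used exactly once in your proof, to force $\ker H_\omega\cap\{\ket{\Omega_\omega}\}^\perp=\{0\}$. For $\gamma=0$ the ``if'' direction would fail, which is consistent with \cref{def:GappedAlgebraic} requiring $\gamma>0$. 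The paper's route buys brevity by delegating all of this to the reference.
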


Note that $\omega(a^*a)-|\omega(a)|^2\geq 0$ for any state $\omega$ and any $a$ by the Cauchy--Schwarz inequality for states~\cite[II.6.2.6]{blackadar2006operator} so \cref{eq:GappedSystemProperty} automatically implies \cref{eq:KMSGroundCondition}.
Thus, by the discussion following \cref{def:KMSGroundState}, \cref{eq:GappedSystemProperty} is equivalent to the stationary condition $\omega(\delta_{\familyH}(a))=0$ and
\begin{align}\label{eq:GappedSystemPropertyHermitian}
     -\frac{\mathrm{i}}{2}\omega\bigl( a^*\delta_{\familyH}(a) - \delta_{\familyH}(a^*)a\bigr) \geq \gamma \bigl( \omega(a^* a) - \abs{\omega(a)}^2 \bigr),
\end{align}
for all $a$ in the domain of $\delta_{\familyH}$.
This symmetrized expression of \cref{eq:GappedSystemPropertyHermitian} is used later in \cref{sec:CompleteSDPHierarchyMainSec}.

\cref{eq:GappedSystemProperty}, equivalently $\omega(\delta_{\familyH}(a)) = 0$ and \cref{eq:GappedSystemPropertyHermitian}, corresponds to the following physical intuition:
\begin{enumerate}[label=(\alph*)]
    \item The left-hand side $-\mathrm{i} \omega(a^* \delta_{\familyH}(a))$ measures the energy increase due to the excitation $a$.
    It is ``energy increase'' because $\omega$ satisfies \cref{eq:KMSGroundCondition}.
    \item The term $\omega(a^* a) - \abs{\omega(a)}^2$ measures the variance of the operator $a$.
    Geometrically, it represents the squared norm of the component of the vector $\pi_{\omega}(a)\ket{\Omega_{\omega}}$ that lies in the orthogonal complement $\{\ket{\Omega_{\omega}}\}^\perp$.
    \item The parameter $\gamma$ represents a lower bound on the ``conversion rate'' between distance (the right-hand-side) and energy (the left-hand-side).
    The inequality implies that the energy increase is proportional to how far $\pi_{\omega}(a)$ pushes the ground state $\ket{\Omega_{\omega}}$ away from $0$-eigenspace.
    Specifically, every unit that $a$ drives $\ket{\Omega_{\omega}}$ out of the $0$-eigenspace costs at least $\gamma$ units of energy.
\end{enumerate}

\begin{remark}\label{rem:NonDegeneracyDifferent}
    Note that ``non-degeneracy'' of a given KMS ground state $\omega$ does not refer to the same thing in the conventional physics literature.
    Local non-degeneracy does not exclude the global degeneracy coming from several distinct pure KMS ground states with inequivalent GNS representations and only excludes zero-energy excitations that can be reached by local operators from the chosen KMS ground state.
    
    A simple example is the ordered phase of the transverse-field Ising chain; see \cref{sec:TransverseFieldIsing}.
    In the usual physics language this phase is two-fold degenerate, corresponding to the two pure KMS ground states induced by $\ket{\uparrow\cdots}$ and $\ket{\downarrow\cdots}$.
    Each of these two states admits a locally non-degenerate bulk spectral gap, yet is separated by a global spin flip and is not connected by local observables.
    Thus, this global degeneracy should not be confused with the local non-degeneracy tested by \cref{def:GappedAlgebraic,prop:GappedSystemProperty}.
\end{remark}

Recall that the set of all KMS ground states is convex.
We make the following observation about local non-degeneracy and pure states (i.e., the extremal points of the KMS ground state set).
\begin{proposition}\label{prop:NonPureKMSGroundStateDegenerate}
    Let $(\cA, \tau_{\familyH}, \delta_{\familyH})$ be a dynamical system induced by the local Hamiltonian family $\familyH = \{H^{\latticeL}\}_L$.
    Let $\omega: \cA \to \mathds{C}$ be a KMS ground state for $\familyH$.

    If $\omega$ is not pure, then its bulk Hamiltonian $H_{\omega}$ is locally degenerate.
    Subsequently, \cref{eq:GappedSystemProperty} holds for all $a$ only when $\gamma = 0$.
\end{proposition}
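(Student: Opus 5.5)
Since $\omega$ is not pure, the natural move is to write $\omega$ as a nontrivial convex combination and use the Radon--Nikodym correspondence in the GNS representation. We produce a nonscalar operator in the commutant $\pi_\omega(\cA)'$ and show that it yields a second vector in $\ker(H_\omega)$. By \cref{prop:GappedSystemProperty}, this excludes any $\gamma>0$.

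\textbf{Step 1 (a nontrivial commutant element).} Because $\omega$ is not pure, there is a state $\phi\neq\omega$ with $\phi\le\lambda\omega$ for some $\lambda>0$. The standard GNS result (e.g.\ \cite[Theorem~2.3.19]{BratteliRobinsonVol1}) then gives a positive $T\in\pi_\omega(\cA)'$ with
\[
\phi(a)=\langle\Omega_\omega,T\pi_\omega(a)\Omega_\omega\rangle .
\]
$T$ is not a scalar multiple of $\id$, since otherwise $\phi=\omega$. Set $\psi:=T\ket{\Omega_\omega}$. Because $\ket{\Omega_\omega}$ is cyclic and $T$ commutes with $\pi_\omega(\cA)$, $\psi$ is not proportional to $\ket{\Omega_\omega}$.

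Here the extremal decomposition is taken within the convex set of KMS ground states. The paper notes that the extreme KMS ground states are pure, but I will use the converse direction directly: if $\omega$ is not pure on $\cA$, then $\pi_\omega(\cA)'\neq\mathds{C}\id$.

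\textbf{Step 2 ($\psi$ is a zero-energy vector).} This is the main obstacle: a general commutant element need not commute with $e^{itH_\omega}$. I would try two routes, in this order.

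First, use \cite[Theorem~5.3.19/Prop.~5.3.19]{BratteliRobinsonVol2}, which shows that for a ground state the unitary $e^{itH_\omega}$ lies in $\pi_\omega(\cA)''$. Then $T$ commutes with $e^{itH_\omega}$, and
\[
e^{itH_\omega}\psi=Te^{itH_\omega}\Omega_\omega=T\Omega_\omega=\psi,
\]
so $H_\omega\psi=0$.

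If that citation is unavailable, the fallback is a Borchers-type argument. For $a\in\cA$ and $T'\in\pi_\omega(\cA)'$, the function
\[
t\mapsto\langle T'\Omega,e^{itH}\pi(a)\Omega\rangle
\]
extends analytically to the upper half-plane because $H\ge0$. Comparing it with the corresponding expression for $\pi(\tau^t(a))$ and applying the Schwarz reflection / Liouville argument shows it is constant in $t$, hence $e^{itH}$ commutes with the commutant on a dense set.

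\textbf{Step 3 (conclusion).} Now $\ker(H_\omega)$ contains $\ket{\Omega_\omega}$ and $\psi\notin\mathds{C}\ket{\Omega_\omega}$, so $H_\omega$ is locally degenerate and condition (1) of \cref{def:GappedAlgebraic} fails.

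For the inequality statement, I want a concrete violating element of $\cA$. Let $\xi=\psi-\langle\Omega,\psi\rangle\Omega$, which is nonzero and lies in $\ker H_\omega\cap\{\Omega\}^\perp$. By cyclicity, choose local $a_n$ with $\pi(a_n)\Omega\to\xi$; this is possible because $\cA_{\mathrm{loc}}$ is dense and a core for $\delta$. Then the right-hand side of \cref{eq:GappedSystemProperty} tends to $\gamma\|\xi\|^2>0$.

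For the left-hand side we need
\[
-\mathrm{i}\,\omega(a_n^*\delta(a_n))=\langle\pi(a_n)\Omega,H\pi(a_n)\Omega\rangle\to\langle\xi,H\xi\rangle=0 .
\]
This requires approximation in the form-domain norm of $H_\omega$, not just in the Hilbert norm. I would obtain it from the core property: $\pi(\mathrm{dom}\,\delta)\Omega$ is a core for $H_\omega$, using $H_\omega\pi(a)\Omega=-\mathrm{i}\pi(\delta(a))\Omega$ from \cref{prop:BulkHamiltonian}.

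Hence \cref{eq:GappedSystemProperty} forces $\gamma\|\xi\|^2\le0$, i.e.\ $\gamma=0$. Alternatively, one can simply invoke the equivalence in \cref{prop:GappedSystemProperty}.
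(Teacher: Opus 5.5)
Your proof is correct, but Step~2 takes a genuinely different route from the paper's.

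\emph{How the paper argues.} It writes $\omega=s\omega_1+(1-s)\omega_2$ with $\omega_1,\omega_2$ distinct \emph{KMS ground states}. This implicitly uses that the ground states form a face of the state space. It then shows that the Radon--Nikodym operator $T_1\in\pi_\omega(\cA)'$ commutes with $\mathrm{e}^{\mathrm{i}tH_\omega}$ by a purely algebraic computation. That computation needs only the $\tau_{\familyH}$-invariance of $\omega_1$ and cyclicity of $\ket{\Omega_\omega}$; positivity of $H_\omega$ plays no role there.

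\emph{How you argue.} You take an arbitrary state $\phi\le\lambda\omega$, which need not be a ground state. You compensate with the spectral condition $H_\omega\ge0$, used in one of two ways:
\begin{itemize}
\item via $\mathrm{e}^{\mathrm{i}tH_\omega}\in\pi_\omega(\cA)''$; or
\item via a Borchers-type argument: $t\mapsto\langle T'\Omega_\omega,\mathrm{e}^{\mathrm{i}tH_\omega}\pi_\omega(a)\Omega_\omega\rangle$ continues boundedly to the upper half-plane. Using $T'\in\pi_\omega(\cA)'$ and $\mathrm{e}^{\mathrm{i}tH_\omega}\Omega_\omega=\Omega_\omega$, it equals $\langle\pi_\omega(a^*)\Omega_\omega,\mathrm{e}^{-\mathrm{i}tH_\omega}T'^*\Omega_\omega\rangle$, which continues boundedly to the lower half-plane. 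Liouville then forces it to be constant.
\end{itemize}
The second route directly gives $\mathrm{e}^{\mathrm{i}tH_\omega}T'\Omega_\omega=T'\Omega_\omega$ by cyclicity, which is all you need.

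\emph{What each buys.} Your route is more self-contained and yields the stronger conclusion $\pi_\omega(\cA)'\ket{\Omega_\omega}\subseteq\ker H_\omega$. The paper's route is more elementary once the face property is granted. However, that property is usually proved by exactly the Borchers-type argument you give. Your remark in Step~1 that ``the decomposition is taken within the KMS ground states'' is therefore unnecessary, and it is not what your argument actually uses.

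The remaining steps match the paper's second half. Your non-colinearity argument (cyclicity together with $T\notin\mathds{C}\id$) is sound. So is the graph-norm approximation through the core $\pi_\omega(\cA_{\mathrm{loc}})\ket{\Omega_\omega}$, and you justify that core property more explicitly than the paper does.
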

\begin{proof}
    Let $(\cH_\omega,\pi_\omega,\ket{\Omega_\omega})$ be the GNS representation of $\omega$, and let $H_\omega$ be the corresponding bulk Hamiltonian.
    Since $\omega$ is not pure, there exist two distinct KMS ground state $\omega_1$ and $\omega_2$, $\omega_1 \neq \omega_2$, and $s \in (0,1)$ such that $\omega = s \omega_1 + (1-s)\omega_2$.
    
    Since $s\omega_1\leq \omega$ and $(1-s)\omega_2\leq \omega$, the Arveson's Radon--Nikodym theorem~\cite[II.6.4.6]{blackadar2006operator} gives unique operators $T_1,T_2\in \pi_\omega(\cA)'$, $0 \leq T_1, T_2 \leq \id$, such that, for all $a\in\cA$,
    \begin{equation}
        \begin{aligned}
            \sandwich{\Omega_\omega}{T_1\pi_\omega(a)}{\Omega_\omega}
            &= s\omega_1(a),\\
            \sandwich{\Omega_\omega}{T_2\pi_\omega(a)}{\Omega_\omega}
            &= (1-s)\omega_2(a).
        \end{aligned}
    \end{equation}
    Moreover, uniqueness gives $T_1+T_2=\id$, and neither $T_1$ nor $T_2$ is a scalar multiple of the identity since the convex decomposition is assumed to be nontrivial.
    Both $T_1\ket{\Omega_\omega}, T_2\ket{\Omega_\omega} \neq 0$ since $\omega_1, \omega_2 \neq 0$.

    We now show that $T_1\ket{\Omega_\omega}$ lies in $\ker(H_\omega)$.
    Recall from \cref{prop:BulkHamiltonian} that $\pi_{\omega}(\tau_{\familyH}^t(a)) = \mathrm{e}^{\mathrm{i} t H_{\omega}} \pi_{\omega}(a) \mathrm{e}^{-\mathrm{i} t H_{\omega}}$ and $\mathrm{e}^{\mathrm{i} t H_{\omega}} \ket{\Omega_\omega} = \ket{\Omega_\omega}$.
    For any $a, b \in \cA$, since $\omega_1$ is a $\tau_{\familyH}$-invariant KMS ground state, we compute
    \begin{equation}
        \begin{aligned}
            \sandwich{\Omega_\omega}{\pi_\omega(a)T_1\pi_\omega(b)}{\Omega_\omega} &=
            s\omega_1(ab) \\
            &= s\omega_1(\tau_{\familyH}^t(ab))\\
            &= s\omega_1(\tau_{\familyH}^t(a) \tau_{\familyH}^t(b)) \\
            &= \sandwich{\Omega_\omega}{T_1\pi_\omega(\tau_{\familyH}^t(a)) \pi_\omega(\tau_{\familyH}^t(b))}{\Omega_\omega} \\
            &= \sandwich{\Omega_\omega}{\pi_\omega(\tau_{\familyH}^t(a)) T_1 \pi_\omega(\tau_{\familyH}^t(b))}{\Omega_\omega} \\
            &= \sandwich{\Omega_\omega}{ (\mathrm{e}^{-\mathrm{i} t H_{\omega}} \pi_\omega(a) \mathrm{e}^{\mathrm{i} t H_{\omega}}) T_1 (\mathrm{e}^{-\mathrm{i} t H_{\omega}} \pi_\omega(b) \mathrm{e}^{\mathrm{i} t H_{\omega}})}{\Omega_\omega} \\
            &= \sandwich{\Omega_\omega}{ \pi_\omega(a) (\mathrm{e}^{\mathrm{i} t H_{\omega}} T_1 \mathrm{e}^{-\mathrm{i} t H_{\omega}}) \pi_\omega(b) }{\Omega_\omega},
        \end{aligned}
    \end{equation}
    where we also use the fact that $\tau_{\familyH}$ is a $*$-automorphism on $\cA$ and $T_1 \in \pi_\omega(\cA)'$.
    It follows from the cyclicity of $\ket{\Omega_\omega}$ that $T_1 = \mathrm{e}^{\mathrm{i} t H_{\omega}}T_1 \mathrm{e}^{-\mathrm{i} t H_{\omega}}$, i.e., $T_1$ commutes with $\mathrm{e}^{\mathrm{i} t H_{\omega}}$.
    Thus,
    \begin{align}
        H_{\omega} T_1 \ket{\Omega_\omega} = -\mathrm{i}\lim_{t \to 0} \frac{\mathrm{e}^{\mathrm{i} t H_{\omega}}T_1 \ket{\Omega_\omega} - T_1 \ket{\Omega_\omega} }{t} = 0.
    \end{align}
    The same argument shows that $T_2\ket{\Omega_\omega} \in \ker(H_\omega)$ as well.
    
    The vector $T_1\ket{\Omega_\omega}$ is not colinear with $\ket{\Omega_\omega}$.
    Indeed, if $T_1\ket{\Omega_\omega}=\lambda\ket{\Omega_\omega}$, then $s\omega_1 = \lambda\omega$.
    Evaluating $a=1$ gives $\lambda=s$, and hence $\omega_1=\omega$, contradicting the nontriviality of the convex decomposition.
    It follows that $\ker(H_\omega)$ is of at least dimension $2$ and $H_\omega$ is locally degenerate.

    For clarity, we directly prove the ``subsequent'' statement without invoking \cref{prop:GappedSystemProperty}.
    As $\sandwich{\Omega_\omega}{T_1 }{\Omega_\omega} = s\omega_1(1) = s$, set
    \begin{align}
        \ket{\eta} \coloneqq \frac{T_1 \ket{\Omega_\omega} - s\ket{\Omega_\omega}}{\sqrt{ \sandwich{\Omega_\omega}{T_1^2 }{\Omega_\omega} - s^2}}.
    \end{align}
    Then $\ket{\eta} \perp \ket{\Omega_\omega}$, $\braket{\eta}{\eta} = 1$, and $\ket{\eta} \in \ker(H_\omega)$ since both $T_1\ket{\Omega_\omega}$ and $\ket{\Omega_\omega}$ are.

    Therefore, since $\pi_{\omega}(\cA_{\mathrm{loc}})\ket{\Omega_{\omega}}$ is a core for $H_{\omega}$, there exists a sequence $\{a_k\}_k \subset \cA_{\mathrm{loc}}$ such that $\pi_{\omega}(a_k) \ket{\Omega_\omega} \to \ket{\eta}$ and $H_\omega \pi_{\omega}(a_k) \ket{\Omega_\omega} \to H_\omega\ket{\eta} = 0$ as $k \to \infty$.
    It follows from \cref{prop:BulkHamiltonian} that 
    \begin{align}
        -\mathrm{i} \omega(a_k^* \delta_{\familyH}(a_k)) = \sandwich{\Omega_{\omega}}{\pi_{\omega}(a_k)^* H_\omega \pi_{\omega}(a_k)}{\Omega_{\omega}} \to \sandwich{\eta}{H_\omega}{\eta} = 0.
    \end{align}
    On the other hand,
    \begin{align}
        \omega(a_k^* a_k) - \abs{\omega(a_k)}^2 \to \braket{\eta}{\eta} - \abs{\braket{\Omega_\omega}{\eta}}^2 = 1.
    \end{align}
    Thus, \cref{eq:GappedSystemProperty} can hold only when $\gamma = 0$.
\end{proof}

Let $\cG_\gamma$ denote the set of KMS ground states of $(\cA,\tau_{\familyH},\delta_{\familyH})$ with locally non-degenerate bulk gap at least $\gamma$.
By \cref{prop:NonPureKMSGroundStateDegenerate}, $\cG_\gamma$ need not be convex, even though the full set of KMS ground states is convex.
On the other hand, $\cG_\gamma$ is weak-$^*$-closed.
Indeed, for every fixed $a$, both sides of \cref{eq:GappedSystemProperty} depend weak-$^*$-continuously on $\omega$, and thus define a weak-$^*$-closed set; intersecting over all $a$ preserves closedness.
In fact, $\cG_\gamma$ is even weak-$^*$-compact as a closed subset of the weak-$^*$-compact space of all states.

So far, the discussion has been state-dependent, it concerns whether a given KMS ground state is locally non-degenerate bulk-gapped.
We now define bulk spectral gaps at the level of dynamical systems. 
\begin{definition}[Bulk spectral gap]\label{def:BulkSpectralGap}
    Let $(\cA,\tau_{\familyH},\delta_{\familyH})$ be a dynamical system induced by the local Hamiltonian family $\familyH=\{H^{\Lambda(L)}\}_L$.
    The \emph{bulk spectral gap} $\Delta_{\mathrm{bulk}}$ of $(\cA,\tau_{\familyH},\delta_{\familyH})$ is the supremum of all $\gamma\geq0$ for which there exists a KMS ground state $\omega$ satisfying \cref{eq:GappedSystemProperty}.
\end{definition}

To finish up the discussion of the algebraic spectral gap, we introduce the notion of bulk gaplessness on KMS ground states.
\begin{definition}[Bulk-gapless KMS ground state]\label{def:GaplessAlgebraic}
    Let $(\cA, \tau_{\familyH}, \delta_{\familyH})$ be a dynamical system induced by the local Hamiltonian family $\familyH = \{H^{\latticeL}\}_L$.
    Let $\omega: \cA \to \mathds{C}$ be a KMS ground state for $\familyH$ and let $(\cH_{\omega},\pi_{\omega},\ket{\Omega_{\omega}})$ be its GNS representation with the bulk Hamiltonian $H_{\omega}$.

    We say that the system $(\cA, \tau_{\familyH}, \delta_{\familyH})$ with the state $\omega$ is bulk-gapless if, for every $\epsilon$,
    \begin{align*}
        \Spectrum{H_{\omega}} \cap (0, \epsilon) \neq \emptyset.
    \end{align*}
    That is, the eigenvalue $0$ of $H_{\omega}$ is not an isolated point of the spectrum.
\end{definition}

\section{A complete SDP hierarchy for certifying bulk spectral gaps}\label{sec:CompleteSDPHierarchyMainSec}
We now construct the SDP hierarchy used to certify upper bounds on the bulk spectral gap.
The hierarchy is based on state polynomial optimization~\cite{klep2024state} and is designed to test, for a fixed threshold $\gamma$, whether there exists a KMS ground state satisfying the dynamical bulk-gap criterion of \cref{prop:GappedSystemProperty}.
Its central property is completeness: feasibility at every level is equivalent to the existence of such a KMS ground state, while infeasibility at some finite level rules out that threshold and hence gives a certified upper bound on the bulk spectral gap of \cref{def:BulkSpectralGap}.
We first present the hierarchy at a high level in \cref{sec:ConvexRelaxationHierarchy}, then give the state-polynomial formulation and proof of completeness in \cref{sec:ConvexRelaxationHierarchyCompleteness}.

\subsection{The hierarchy of relaxations on a high level}\label{sec:ConvexRelaxationHierarchy}
Let the interaction $\Phi$ be finite-range with the interaction length $l$.
Our goal is to turn the dynamical characterization in \cref{prop:GappedSystemProperty} into a tractable algorithm.
Rather than working with the full quasi-local $C^*$-algebra $\cA$, we restrict, for each $n$, to the local algebra $\cA_{\Lambda(n)}$ of observables supported on $\Lambda(n)$, equipped with the corresponding finite-volume Hamiltonian $H^{\Lambda(n)}$.
Note that we use $n$ (rather than $L$ as in \cref{sec:AlgebraicFormalismMainSec}) to label the finite regions $\Lambda(n)$, since $n$ will also denote the corresponding level of the relaxation hierarchy.

Within $\cA_{\Lambda(n)}$ we introduce the notion of operator complexity.
Any $a \in \cA_{\Lambda(n)}$ can be written as a noncommutative polynomial in the local generators $\{{x^{(i)}_j}\}_{i \in \Lambda(n)}$.
We define the degree of $a$, denoted by $\deg(a)$, to be the minimum total degree among all such polynomial representations.
Finally, let $W_{\Lambda(n)}^d \subset \cA_{\Lambda(n)}$ denote the set of all monomials in the generators of total degree $\leq d$ and $\cA_{\Lambda(n)}^d = \SpanNoBracket{(W_{\Lambda(n)}^d)}$ the polynomials in $\cA_{\Lambda(n)}$ of total degree $\leq d$.

In the following we assume that the degree of the local Hamiltonians is uniformly bounded, i.e., there exists a (minimal) integer $\deg(H)$ independent of the region $\Lambda(n)$ such that $H^{\Lambda(n)} \in \cA^{\deg(H)}_{\Lambda(n)}$ for all $n$.
This assumption is fulfilled for many physical models; for example, for all translation-invariant Hamiltonians with finite-range interactions.

\begin{definition}[The hierarchy of relaxations]\label{def:HierarchyOfRelaxations}
    Let $(\cA, \tau_{\familyH}, \delta_{\familyH})$ be a dynamical system induced by a family of local Hamiltonians $\familyH = \{H^{\Lambda(n)}\}_n$ with the interaction length $l$.
    Fix a gap parameter $\gamma \geq 0$.

    We first define the feasibility hierarchy for certifying the existence of a KMS ground state with a locally non-degenerate bulk spectral gap at least $\gamma$.
    For $n,d\in\mathds N$, with $n,d$ sufficiently large, the relaxation on $\cA_{\Lambda(n)}^{2d}$ asks whether there exists a linear functional $\omega_n^d:\cA_{\Lambda(n)}^{2d}\to \mathds{C}$ satisfying
    \begin{equation}\label{eq:GappedFeasibilityHierarchy}
        \begin{aligned}
          &\omega_n^d(1)=1, &&\text{(normalization)}\\[0.4em]
          &\omega_n^d(a^*a)\ge0,
            \quad\forall\,a\in\cA_{\Lambda(n)}^{d}, &&\text{(state positivity)}\\[0.4em]
          &\omega_n^d\bigl([H^{\Lambda(n)},a]\bigr)=0,
            \quad\forall\,a\in\cA_{\Lambda(n-l)}^{2d-\deg(H)},
            &&\text{(stationarity)}\\[0.4em]
                &\frac{1}{2}\,
            \omega_n^d\!\left(
                a^*[H^{\Lambda(n)},a]-[H^{\Lambda(n)},a^*]a
            \right)
            \;\ge\;
            \gamma\bigl(\omega_n^d(a^*a)-|\omega_n^d(a)|^2\bigr), \quad\forall\,a\in\cA_{\Lambda({n-l})}^{d- \lceil \deg(H) / 2 \rceil}
            &&\text{(bulk spectral gap)}
        \end{aligned}
    \end{equation}

    More generally, let $O=O^*\in\cA$ be a local observable.
    We are interested in the extremal expectation values of $O$ over $\cG_\gamma$, the weak-$*$-closed set of KMS ground states of $(\cA,\tau_{\familyH},\delta_{\familyH})$ with locally non-degenerate bulk gap at least $\gamma$:
    \begin{equation}\label{eq:ExtremalExpectationValues}
        \begin{aligned}
            \langle O\rangle_{\min}
            &\coloneqq
            \min_{\omega\in\cG_\gamma}\omega(O),\\
            \langle O\rangle_{\max}
            &\coloneqq
            \max_{\omega\in\cG_\gamma}\omega(O).
        \end{aligned}
    \end{equation}
    
    Whenever $n,d$ are sufficiently large such that $O,H^{\Lambda(n)}\in\cA_{\Lambda(n)}^{2d}$, we define the corresponding upper and lower relaxations by optimizing $O$ over the same feasible set:
    \begin{equation}\label{eq:GappedRelaxHierarchy}
        \begin{aligned}
        \alpha^{\max/\min}_{n,d} &\;=\operatorname*{maximize/minimize}_{\substack{
            \omega_n^d: \cA_{\Lambda(n)}^{2d} \to \mathds{C} \text{ linear}
        }} 
          \;\omega_n^d(O),\\[0.4em]
        \mathrm{s.t.}\quad
        &\omega_n^d(1)=1, &&\text{(normalization)}\\[0.4em]
          &\omega_n^d(a^*a)\ge0,
            \quad\forall\,a\in\cA_{\Lambda(n)}^{d}, &&\text{(state positivity)}\\[0.4em]
          &\omega_n^d\bigl([H^{\Lambda(n)},a]\bigr)=0,
            \quad\forall\,a\in\cA_{\Lambda(n-l)}^{2d-\deg(H)},
            &&\text{(stationarity)}\\[0.4em]
                &\frac{1}{2}\,
            \omega_n^d\!\left(
                a^*[H^{\Lambda(n)},a]-[H^{\Lambda(n)},a^*]a
            \right)
            \;\ge\;
            \gamma\bigl(\omega_n^d(a^*a)-|\omega_n^d(a)|^2\bigr), \quad\forall\,a\in\cA_{\Lambda({n-l})}^{d- \lceil \deg(H) / 2 \rceil}.
            &&\text{(bulk spectral gap)}
        \end{aligned}
    \end{equation}
    By setting $O=1$, the trivial optimization reduces to the feasibility hierarchy of \cref{eq:GappedFeasibilityHierarchy}.
\end{definition}

The normalization condition and state positivity constraint correspond to the definition of states on $\cA$, and the stationarity constraint and bulk spectral gap condition are meant to recover \cref{eq:StationaryCondition,eq:GappedSystemPropertyHermitian}, which are equivalent to \cref{eq:GappedSystemProperty}.
Note that while the term $|\omega_n^d(a)|^2$ in the constraints is nonlinear, this optimization problem can be cast as an SDP using the framework of state polynomial optimization~\cite{klep2024state}; see \cref{def:GappedRelaxHierarchyStatePoly} for the full formulation.

\begin{remark}\label{rem:SmallerSupportGapConstraint}
    The subscript $\Lambda(n-l)$ in $\cA_{\Lambda(n-l)}^{2d-\deg(H)}$ in stationarity constraint and $\cA_{\Lambda({n-l})}^{d- \lceil \deg(H) / 2 \rceil}$ in the bulk spectral gap constraint is essential.
    Since the interaction has a range $l$, any such excitation $a$ is separated from the boundary of $\Lambda(n)$ by enough room for the commutator $[H^{\Lambda(n)},a]$ to coincide with the infinite-volume dynamic derivation $\delta_{\familyH}(a)$.
 Thus, the finite relaxation tests the bulk gap condition in a local window without imposing any boundary condition outside $\Lambda(n)$.
    Without this buffer, the constraint would involve boundary terms of the chosen finite-volume Hamiltonian, reducing the test to a finite-volume gap calculation with that boundary condition.
\end{remark}

\begin{remark}\label{rem:ComputabilityQuantifierElimination}
For fixed $(n,d)$, the relaxation \cref{eq:GappedFeasibilityHierarchy} is a finite-dimensional feasibility problem (over finitely many parameters specifying the linear functional $\omega_n^d$ on $\cA_{\Lambda(n)}^{2d}$).
Consequently, it can in principle be solved using quantifier elimination over the reals (the Tarski--Seidenberg theorem),
for instance via cylindrical algebraic decomposition (CAD).
However, quantifier elimination/CAD has doubly exponential worst-case complexity, and is therefore impractical except for very small instances~\cite{basupollackroy2006algorithms}.
In contrast, the state polynomial optimization yields an SDP hierarchy, which is capable of handling optimization problems of \cref{eq:GappedRelaxHierarchy} beyond feasibility, and is far more effective in practice; see \cref{sec:NumericalShowcase}.
\end{remark}

This hierarchy of relaxations admits a clear physical interpretation.
If the system $(\cA, \tau_{\familyH}, \delta_{\familyH})$ admits a KMS ground state $\omega$ with a locally non-degenerate bulk gap of at least $\gamma$, then $\omega$ restricted to the subspace $\cA_{\Lambda(n)}^{2d}$, denoted by $\omega_n^{2d}$, must be a valid solution to \cref{eq:GappedFeasibilityHierarchy,eq:GappedRelaxHierarchy} for every $(n, d)$.
Furthermore, it is straightforward to see that the optimal values of the hierarchy form a monotonic sequence as $n$ (or $d$) increases.

Consequently, if the problem is found to be infeasible at any level $(n, d)$, it provides a rigorous certificate that the system cannot admit a bulk spectral gap of at least $\gamma$.
On the other hand, feasibility at a single level $(n,d)$ does not certify a bulk spectral gap; it only confirms the absence of gap-violating excitations due to operators in $\cA_{\Lambda(n)}^{2d}$ rather than the full $\cA$.

We will show in \cref{thm:CompletenessSDPHierarchy} that the state polynomial optimization formulation of \cref{def:HierarchyOfRelaxations} is complete.
Furthermore, the hierarchy estimates the expectation values in the sense that $\alpha^{\min}_{n, d} \nearrow \langle O \rangle_{\min}$ and $\alpha^{\max}_{n, d} \searrow \langle O \rangle_{\max}$.

\subsection{The state polynomial formulation and its completeness}\label{sec:ConvexRelaxationHierarchyCompleteness}
We first formalize the state polynomial optimization problem corresponding to \cref{def:HierarchyOfRelaxations} and then prove that the resulting hierarchy is complete: feasibility at all relaxation levels is equivalent to the existence of a KMS ground state with a locally non-degenerate bulk spectral gap at least $\gamma$, and the corresponding optimization values converge to the exact extremal expectation values over such states.

Recall that $\cA_{\mathrm{loc}} = \bigcup_{L \in \mathds{N}} \cA_{\latticeL}$ and $\cA \coloneqq \overline{\cA_{\mathrm{loc}}}$, where each matrix algebra $\cA_i \simeq M_q(\mathds{C})$, $i \in \latticeL$, is generated by $\{x^{(i)}_j\}_j$.
As $\cA_i$ is just matrix algebra, all elements of $\cA_i$ are bounded; there exists a constant $M^{(i)}_j$ such that $M^{(i)}_j - (x^{(i)}_j)^2 \succeq 0$, and in particular, the square root  $s^i_j \coloneqq \bigl(M^{(i)}_j - (x^{(i)}_j)^2 \bigr)^{1/2} \in \cA_i$ exists.
(For example, with spin-$1/2$ chain and Pauli matrices, we have $M^{(i)}_j=1$ and $1 - (X^{(i)})^2 = 1 - (Y^{(i)})^2 = 1- (Z^{(i)})^2 = 0$.)

To begin with, we write the explicit state polynomial optimization formulation of \cref{eq:GappedRelaxHierarchy}, following the notation of~\cite{klep2024state}.
For every (noncommutative) monomial $w$ in the local generators $\{x^{(i)}_j\}$, we introduce a formal \emph{state symbol} $\varsigma(w)$.
Heuristically, $\varsigma(w)$ represents the scalar expectation of $w$ in the state $\varsigma$.

We extend $\varsigma(\cdot)$ by linearity and involution, and impose the compatibility rules for all monomials $w,v$ and complex numbers $c$
\begin{equation}\label{eq:StateSymbolRules}
    \varsigma(1)=1,\quad
    \varsigma(w^*)=\varsigma(w)^*,\quad
    \varsigma(w+cv)=\varsigma(w)+c\,\varsigma(v),
\end{equation}
together with commutativity of all state symbols and the convention that $\varsigma(\cdot)$ treats state symbols as scalars:
\begin{equation}
    [\varsigma(w),\varsigma(v)]=0,\quad
    \varsigma\bigl(w\,\varsigma(v)\bigr)=\varsigma(w)\varsigma(v).
\end{equation}

These rules give rise to the ring of state polynomials
\begin{equation}
    \thinS \coloneqq \mathds{C}[ \varsigma(w) : w \neq 1 \text{ is a monomial in the generators}]
\end{equation}
with an antilinear involution $*$.
We then define the ring of \emph{noncommutative state polynomials} by
\begin{equation}
    \fatS \coloneqq \thinS \otimes \cA_{\mathrm{loc}},
\end{equation}
whose elements are finite sums $\sum_r p_r a_r$ with $p_r \in \thinS$ and $a_r \in \cA_{\mathrm{loc}}$.

We set $\deg(\varsigma(w))\coloneqq\deg(w)$ and extend $\deg(\cdot)$ multiplicatively, so that $\deg(\varsigma(w)\varsigma(v))=\deg(w)+\deg(v)$ and $\deg(\varsigma(w)u)=\deg(w)+\deg(u)$ for $u\in\fatS$.
For each $(n,d)$ we write $\thinS_{\Lambda(n)}^d$ and $\fatS_{\Lambda(n)}^d$ for the truncations to (noncommutative) state polynomials supported on $\Lambda(n)$ and of total degree $\leq d$.
By a \emph{monomial in $\thinS$ (or $\fatS$)} we mean a product of factors $\varsigma(w)$ (and one factor $w'$) with $w$ (and $w'$) monomials in $\cA_{\mathrm{loc}}$.

\begin{definition}[State polynomial optimization for bulk spectral gap]\label{def:GappedRelaxHierarchyStatePoly}
    Fix a gap parameter $\gamma \geq 0$ and assume there is a finite maximal interaction length $l$.
    Let $n, d \in \mathds{N}$ be large enough such that $O, H^{\Lambda(n)} \in \cA_{\Lambda(n)}^{2d}$.
    
    Let $\mathcal{L}^d_n: \thinS_{\Lambda(n)}^{2d} \to \mathds{C}$ be a linear functional.
    The associated moment matrix $\mathbf{M}_d(\mathcal{L}^d_n)$ is indexed by the monomials $s, t \in\fatS_{\Lambda(n)}^{d}$ and defined by
    \begin{align}
    \left[\mathbf{M}_d(\mathcal{L}^d_n)\right]_{s,t} \coloneqq \mathcal{L}^d_n(\varsigma(s^*t)).
    \end{align}
    The moment matrix associated with the bulk spectral gap, $\mathbf{M}_{d-\lceil\deg(H)/2\rceil}^{\mathrm{gap},\gamma}\bigl( \mathcal{L}^d_n\bigr)$, is indexed by the monomials $s,t \in \fatS^{d-\lceil\deg(H)/2\rceil}_{\Lambda(n-l)}$ and defined by
    \begin{equation}
        \Bigl[\mathbf{M}_{d-\lceil\deg(H)/2\rceil}^{\mathrm{gap},\gamma}\bigl( \mathcal{L}^d_n\bigr)\Bigr]_{s,t}
        \coloneqq
        \mathcal L_n^d\!\left( \frac{1}{2} \Bigl(\varsigma\bigl( s^* [H^{\Lambda(n)}, t] - [H^{\Lambda(n)}, s^*]t \bigr) \Bigr) 
        -
        \gamma \Bigl(
            \varsigma(s^*t) - \varsigma(s^*) \varsigma(t)
        \Bigr) \right).
    \end{equation}

    The state polynomial relaxation at level $(n,d)$ for the local observable $O \in \cA_{\Lambda(n)}^{2d}$ is defined as
    \begin{equation}\label{eq:GappedRelaxHierarchyStatePoly}
        \begin{alignedat}{2}
        \alpha^{\max/\min}_{n,d} &\;\coloneqq \operatorname*{maximize/minimize}_{\substack{
        \mathcal{L}_d \colon \thinS_{\Lambda(n)}^{2d} \to \mathds{C} \text{ linear functional}
        }}
        \;\mathcal{L}^d_n(\varsigma(O)),\\[0.4em]
        \mathrm{s.t.}\quad
        &\;\mathcal{L}^d_n(1) = 1, &&\text{(normalization)}\\
        &\;\mathbf{M}_d(\mathcal{L}^d_n) \succeq 0, &&\text{(positivity)}\\
        &\;\mathcal{L}_n^d\bigl(\varsigma([H^{\Lambda(n)}, w])\bigr)=0, \,\quad \forall\, w \in \fatS^{2d-\deg(H)}_{\Lambda(n-l)}, \quad &&\text{(stationarity)}
        \\
        &\; \mathbf{M}_{d-\lceil\deg(H)/2\rceil}^{\mathrm{gap},\gamma}\bigl( \mathcal{L}^d_n\bigr) \succeq 0. &&(\text{bulk spectral gap})
        \end{alignedat}
    \end{equation}
    By setting $O=1$, the trivial optimization reduces to the hierarchy for \cref{eq:GappedFeasibilityHierarchy}.
\end{definition}

Let us spell out the meaning of the constraints in \cref{eq:GappedRelaxHierarchyStatePoly}.
The positivity constraint $\mathbf{M}_d(\mathcal L_n^d)\succeq0$ is equivalent to
\begin{align}\label{eq:MomentPositivityExplained}
    \mathcal L_n^d\bigl( \varsigma(q^*q)\bigr)\geq0,
    \qquad
    \forall q\in \fatS_{\Lambda(n)}^d
\end{align}
as in~\cite[Lemma~6.4]{klep2024state}.
In particular, it contains conditions $\mathcal L_n^d(\varsigma(a^*a))\geq0$ for $a\in\cA_{\Lambda(n)}^d$ as special cases, because the moment matrix $\mathbf{M}_d(\mathcal L_n^d)$ is indexed by $\fatS_{\Lambda(n)}^d$ rather than only by $\thinS_{\Lambda(n)}^d$.
The stationarity constraint is equivalent to
\begin{align}
    \label{eq:StationarityConditionEquivalence}
    \mathcal L_n^d\bigl(\varsigma([H^{\Lambda(n)},q])\bigr)
    =
    0,
    \qquad
    \forall q\in \fatS_{\Lambda(n-l)}^{2d-\deg(H)}.
\end{align}
The bulk spectral gap constraint $\mathbf{M}_{d-\lceil\deg(H)/2\rceil}^{\mathrm{gap},\gamma}\bigl( \mathcal{L}^d_n\bigr)\succeq0$ is equivalent to
\begin{align}\label{eq:BulkSpectralGapConditionEquivalence}
    \mathcal L_n^d\!\left(
    \frac{1}{2}\Bigl(
        \varsigma\bigl(q^*[H^{\Lambda(n)},q] - [H^{\Lambda(n)},q^*]q\bigr)
    \Bigr)
    -
    \gamma\bigl(\varsigma(q^*q)-\abs{\varsigma(q)}^2\bigr)
    \right)
    \geq 0,
    \qquad
    \forall q\in\fatS_{\Lambda(n-l)}^{d-\lceil\deg(H)/2\rceil}
\end{align}
which is a state-polynomial formulation of the symmetrized bulk-gap criterion \cref{eq:GappedSystemPropertyHermitian}.

Let $\cQ_n^{d,\mathrm{pos}}$ and $\cQ_n^{d,\mathrm{gap}}$ be the convex cones in $\thinS_{\Lambda(n)}^{2d}$ generated by the hermitian elements appearing in
\cref{eq:MomentPositivityExplained,eq:BulkSpectralGapConditionEquivalence}:
\begin{equation}
    \begin{aligned}
        \cQ_n^{d,\mathrm{pos}}
        &\coloneqq
        \operatorname{cone}
        \Bigl\{
            \varsigma(q^*q)
            \;:\;
            q\in\fatS_{\Lambda(n)}^d
        \Bigr\},\\
        \cQ_n^{d,\mathrm{gap}}
        &\coloneqq
        \operatorname{cone}
        \Bigl\{
            \frac{1}{2}\Bigl(
                \varsigma\bigl(q^*[H^{\Lambda(n)},q] - [H^{\Lambda(n)},q^*]q\bigr)
            \Bigr)
            -
            \gamma\bigl(\varsigma(q^*q)-\abs{\varsigma(q)}^2\bigr)
            \;:\;
            q\in\fatS_{\Lambda(n-l)}^{d-\lceil\deg(H)/2\rceil}
        \Bigr\}.
    \end{aligned}
\end{equation}
Similarly, let $\mathcal{I}_n^d$ be the hermitian part of the linear subspace
appearing in \eqref{eq:StationarityConditionEquivalence}, i.e.,
\begin{equation}
    \mathcal{I}_n^d
    \coloneqq
    \bigl\{
        \varsigma([H^{\Lambda(n)},q-q^*]) \;:\; q \in \fatS_{\Lambda(n-l)}^{2d-\deg(H)}
    \bigr\}
\end{equation}
where we use that $[H^{\Lambda(n)},a]$ is hermitian when $a^* = -a$.
Then the positivity, stationarity, and bulk spectral gap constraints are equivalent to the requirement that $\mathcal L_n^d$ is nonnegative on the convex cone
\begin{align}
    \cQ_n^d \coloneqq \cQ_n^{d,\mathrm{pos}}+\mathcal{I}_n^d + \cQ_n^{d,\mathrm{gap}}.
\end{align}
By construction, $\cQ_n^d\subseteq\cQ_{n'}^{d'}$ whenever $n\leq n'$ and $d\leq d'$.

\begin{theorem}\label{thm:CompletenessSDPHierarchy}
    Let $(\cA, \tau_{\familyH}, \delta_{\familyH})$ be a dynamical system induced by a family of local Hamiltonians $\familyH = \{H^{\Lambda(n)}\}_n$ with interaction length $l$, and let $O \in \cA$ be a local observable.
    Fix a gap parameter $\gamma \geq 0$.
    
    Then, the state polynomial reformulation (\cref{def:GappedRelaxHierarchyStatePoly}) of the relaxation hierarchy (\cref{def:HierarchyOfRelaxations}) is feasible at every level $(n, d)$ with the parameter $\gamma$ if and only if there exists a KMS ground state $\omega$ on the dynamical system $(\cA, \tau_{\familyH}, \delta_{\familyH})$ with a locally non-degenerate bulk spectral gap of at least $\gamma$.
    
    Furthermore, the hierarchy values $\alpha^{\min}_{n,d}$ and $\alpha^{\max}_{n,d}$ converge monotonically to the extremal expectation values from \cref{eq:ExtremalExpectationValues}:
    \begin{align}
            \alpha^{\min}_{n, d} \nearrow \langle O \rangle_{\min}, \quad
            \alpha^{\max}_{n, d} \searrow \langle O \rangle_{\max}
    \end{align}
    as $n, d \to \infty$.
\end{theorem}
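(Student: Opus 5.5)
The easy direction comes first. If $\omega$ is a KMS ground state with a locally non-degenerate bulk spectral gap of at least $\gamma$, I will define $\mathcal L_n^d$ on $\thinS_{\Lambda(n)}^{2d}$ by evaluating state symbols at $\omega$: $\mathcal L_n^d(\prod_k \varsigma(w_k)) \coloneqq \prod_k \omega(w_k)$. This map is multiplicative on state symbols, so for $q\in\fatS$ one has $\mathcal L_n^d(\varsigma(q^*q)) = \omega(\tilde q^*\tilde q)\ge0$, where $\tilde q\in\cA_{\mathrm{loc}}$ is obtained by substituting the numbers $\omega(w)$ for the symbols. Stationarity follows from \cref{eq:StationaryCondition}. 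The gap matrix reduces to \cref{eq:GappedSystemPropertyHermitian} applied to $\tilde q$. Both use \cref{rem:SmallerSupportGapConstraint}, which says $[H^{\Lambda(n)},\tilde q]=-\mathrm i\delta_{\familyH}(\tilde q)$ on $\Lambda(n-l)$. Hence every level is feasible. The same evaluation gives $\alpha^{\min}_{n,d}\le\langle O\rangle_{\min}$ and $\alpha^{\max}_{n,d}\ge\langle O\rangle_{\max}$. Monotonicity follows from the nesting $\cQ_n^d\subseteq\cQ_{n'}^{d'}$, since restricting a feasible functional to a lower level keeps it feasible.

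For the converse, I will run a compactness/extension argument. Fix feasible $\mathcal L_n^d$ along a diagonal sequence $(n,d)\to\infty$. The bounds $M^{(i)}_j-(x^{(i)}_j)^2\succeq0$, encoded through the squares $s^i_j$ as in the Archimedean setting of \cite{klep2024state}, should give uniform bounds $|\mathcal L_n^d(\prod\varsigma(w_k))|\le\prod \|w_k\|$ on monomials. I expect this to follow from positivity of the moment matrix, using the standard Archimedean truncation trick; if necessary I will add the localizing constraints implicitly via the exact matrix relations in $\cA_{\Lambda(n)}$, since the generators satisfy their finite-dimensional relations. Extending by zero and applying Tychonoff, or a diagonal argument over the countably many monomials, gives a pointwise limit $\mathcal L$ on all of $\thinS$. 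This limit is nonnegative on $\bigcup_{n,d}\cQ_n^d$.

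The key step, and the main obstacle, is to show that $\mathcal L$ comes from a genuine state, i.e.\ that it is a mixture of evaluations $\varsigma(w)\mapsto\omega(w)$. Here I will invoke the state-polynomial Positivstellensatz/GNS machinery of \cite{klep2024state}. A positive functional on the full Archimedean state-polynomial algebra is represented by a probability measure $\mu$ on the space of states of $\cA$, which is weak-$*$ compact because $\cA_{\mathrm{loc}}$ is dense. I will construct $\mu$ via the commutative $C^*$-algebra generated by the $\varsigma(w)$: a GNS/Riesz argument gives the measure, and positivity on $\varsigma(q^*q)$ for $q\in\fatS$ forces its support to consist of states. I then need $\mu$-almost every $\omega$ to satisfy stationarity and \cref{eq:GappedSystemPropertyHermitian}. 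The difficulty is that the gap constraint is nonlinear and is imposed only on average. To handle this I will use the fact that \cref{eq:BulkSpectralGapConditionEquivalence} is imposed for all $q\in\fatS$, that is, with state-polynomial coefficients. Taking $q = p\cdot a$ with $p\in\thinS$ turns the constraint into $\int |p(\omega)|^2 g_a(\omega)\,d\mu\ge0$, where $g_a(\omega)$ is the gap defect of $a$ at $\omega$. Since the $|p|^2$ are dense among nonnegative continuous functions by Stone--Weierstrass on the state space, this yields $g_a\ge0$ $\mu$-almost everywhere for each of the countably many $a$ in a dense set. Stationarity is handled similarly, using the form $q-q^*$. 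Continuity and the core property of $\cA_{\mathrm{loc}}$ for $\delta_{\familyH}$ then extend both conditions to all $a$ in the domain. So $\mu$ is supported on $\cG_\gamma$, and in particular $\cG_\gamma\ne\emptyset$.

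For convergence, take optimal feasible functionals at each level, pass to a limit point, and obtain $\mathcal L(\varsigma(O))=\int\omega(O)\,d\mu\in[\langle O\rangle_{\min},\langle O\rangle_{\max}]$, since $\mu$ is supported on $\cG_\gamma$. Combined with monotonicity and the reverse inequalities from the first paragraph, this gives $\alpha^{\min}_{n,d}\nearrow\langle O\rangle_{\min}$ and $\alpha^{\max}_{n,d}\searrow\langle O\rangle_{\max}$.
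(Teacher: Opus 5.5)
Your proposal is correct and follows essentially the same route as the paper. Both arguments use evaluation at $\omega$ for necessity and for the one-sided bounds, an archimedean diagonal limit $\mathcal{L}$ on $\thinS$, and a representation of $\mathcal{L}$ as a probability mixture of characters $\varsigma(a)\mapsto\omega(a)$; multiplicativity of these characters turns $\varsigma(a^*)\varsigma(a)$ into $\abs{\omega(a)}^2$, so every character in the support is a KMS ground state with gap at least $\gamma$. The one difference is how the support property is obtained. You derive it by hand, via GNS/Gelfand--Riesz, localization with $q=p\cdot a$ for $p\in\thinS$, and Stone--Weierstrass. The paper instead cites the Kadison--Dubois representation theorem for the archimedean quadratic module $\cQ$ and picks one character $\mathcal{K}$ with $\mathcal{K}(\varsigma(O))\geq\mathcal{L}(\varsigma(O))$. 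Both versions rest on the same fact: the constraints are imposed for all $q\in\fatS$, which makes $\cQ$ a quadratic module over $\thinS$.
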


For the proof of this theorem, we need to construct an archimedean quadratic module $\cQ \subset \thinS$.
As in~\cite{klep2024state}, let $\cQ$ be the convex cone generated by $\varsigma(q^*q)$, the hermitian bulk-gap polynomials $\frac{1}{2}\bigl(\varsigma\bigl(q^*\delta_\familyH(q)-\delta_\familyH(q^*)q\bigr)\bigr)-\gamma\bigl(\varsigma(q^*q)-\varsigma(q^*)\varsigma(q)\bigr)$, and the hermitian polynomials encoding the stationarity condition, 
$\varsigma(\delta_\familyH(q-q^*))$, for all $q\in \fatS$.
Then $\cQ$ is the union of the cones $\cQ_n^d$ on which the $\mathcal{L}_n^d$ are required to be positive.
Moreover, $\cQ$ is a quadratic module of $\thinS$, i.e., $p^*\cQ p \subseteq \cQ$ for all $p\in\thinS$, because the maps $\varsigma$ and $\delta_\familyH$
treat elements of $\thinS$ as scalars.
Next, recall that for each $x_j^{(i)}$ there is an $s_j^{(i)}=\bigl(M_j^{(i)}-(x_j^{(i)})^2\bigr)^{1/2}\in\cA_i$.
Hence, for every $q\in\fatS$,
\begin{align}
    \varsigma\!\left(q^*
    \bigl(M_j^{(i)}-(x_j^{(i)})^2\bigr)q\right)
    =
    \varsigma\!\left((s_j^{(i)}q)^*(s_j^{(i)}q)\right)
    \in \cQ .
\end{align}
Thus, \cite[Lemma~5.4]{klep2024state} applies and therefore the quadratic module $\cQ$ is archimedean.

\begin{proof}
    The necessity of the hierarchy is immediate from the discussion at the end of \cref{sec:ConvexRelaxationHierarchy} and by letting both $\varsigma$ and $\mathcal{L}^d_n$ be induced by the KMS ground state $\omega$.
    For the sufficiency of the hierarchy, assuming the feasibility of \cref{def:GappedRelaxHierarchyStatePoly} for every $(n, d)$, we need to construct a state $\omega: \cA \to \mathds{C}$ that satisfies \cref{eq:GappedSystemProperty} for the dynamical system $(\cA, \tau_{\familyH}, \delta_{\familyH})$ induced by $\familyH = \{H^{\Lambda(n)}\}_n$.
    To this end, we first extract a limit $\mathcal{L}$ on $\thinS$ from the feasible solutions $\mathcal{L}^d_n$ using a standard argument, and then we use the representation theorem for archimedean quadratic modules~\cite[Theorem~5.4.4]{marshall2008positive} (also known as the Kadison-Dubois Theorem) to obtain an extremal solution $\mathcal K$ to construct the desired $\omega$.
    
    First, given the functionals $\mathcal{L}_n^d: \thinS_{\Lambda(n)}^{2d} \to \mathds{C}$, we extract a state $\mathcal{L}$ on the full $\thinS$.
    To this end, note that $\thinS$ is the inductive limit of $\thinS_{\Lambda(n)}^{2d}$ as $n, d \to \infty$.
    As a vector space of countable dimension, we can enumerate a basis $\{p_1, p_2, \dots\}$ for $\thinS$.
    We can further assume that these basis elements are hermitian, i.e., $p_i^* = p_i$.

    For each basis element $p_i \in \thinS$, since $\cQ \subset \thinS$ is archimedean, there exists a constant $m_i > 0$ such that $m_i \pm p_i \in \cQ$.
    As $\cQ$ is the union of the cones $\cQ_n^d$ on which $\mathcal{L}_n^d$ is nonnegative, for all sufficiently large $(n,d)$, we have $m_i \pm p_i \in \cQ_n^d$ and hence $\abs{\mathcal{L}_n^d(p_i)} \leq m_i$.
    That is, for each fixed $p_i$, the sequence $\{\mathcal L_n^d(p_i)\}_{n,d}$ is uniformly bounded and has a convergent subsequence by compactness.
    A standard diagonal argument over the countable basis $\{p_1,p_2,\dots\}$ gives a convergent subsequence of functionals $\mathcal{L}_{n_k}^{d_k}$ such that $\mathcal{L}_{n_k}^{d_k}(p_i)$ is convergent for all $i$ as $k \to \infty$.
    We define $\mathcal L:\thinS \to \mathds{C}$ on the basis $\{p_1, p_2, \dots\}$ by
    \begin{align}
        \mathcal{L}(p_i)\coloneqq \lim_{k\to\infty}\mathcal{L}_{n_k}^{d_k}(p_i),
    \end{align}
    and extend linearly to all of $\thinS$. 
    By construction, $\mathcal L(1)=1$ and $\mathcal L(\cQ)\subset\mathds R_{\ge0}$.

    Since $\cQ \subset \thinS$ is archimedean and $\mathcal{L}(\cQ) \subset \mathds{R}_{\geq0}$, by the representation theorem for archimedean quadratic modules \cite[Theorem~5.4.4]{marshall2008positive}, there exists a probability measure $\mu$, supported on the set of characters $\mathcal{K}$ on $\thinS$ satisfying $\mathcal{K}(1)=1$ and $\mathcal{K}(\cQ)\subset\mathds{R}_{\geq0}$, such that
    \begin{align}\label{eq:KadisonDuboisDecomposition}
        \mathcal{L}(a) = \int \mathcal{K}(a) \, d\mu(\mathcal{K}).
    \end{align}
    Therefore, every $\mathcal{K} \in \mathrm{supp}(\mu)$ gives rise to a feasible solution to \cref{eq:GappedRelaxHierarchyStatePoly} for all $n, d$.
    
    Moreover, since $\mathcal{L}(\varsigma(O))=\int \mathcal{K}(\varsigma(O))\,d\mu(\mathcal{K})$, there exists $\mathcal{K} \in \mathrm{supp}(\mu)$ such that $\mathcal{K}(\varsigma(O))\geq \mathcal{L}(\varsigma(O))$ (and similarly there exists $\mathcal{K}\in \mathrm{supp}(\mu)$ with $\mathcal{K}(\varsigma(O))\leq \mathcal{L}(\varsigma(O))$).
    In case of maximization with $\alpha^{\max}_{n,d}$, choose $\mathcal{K}$ such that $\mathcal{K}(\varsigma(O)) \geq \mathcal{L}(\varsigma(O))$; otherwise, for minimization with $\alpha^{\min}_{n,d}$, fix $\mathcal{K}$ such that $\mathcal{K}(\varsigma(O)) \leq \mathcal{L}(\varsigma(O))$.

    Define the functional $\omega: \cA_{\mathrm{loc}} \to \mathds{C}$ by
    \begin{align}
        \omega(a) \coloneqq \mathcal{K}(\varsigma(a)).
    \end{align}
    It is straightforward to check that $\omega(1)=1$, and $\omega(a^* a) \geq 0$ for all $a \in \cA_{\mathrm{loc}}$ 
    because $\varsigma(a^*a) \in Q$ for all $a \in \cA_{\mathrm{loc}}$.
    Hence, $\omega$ is a state on $\cA_{\mathrm{loc}}$.
    Moreover, from $\pm \varsigma(\delta_\familyH(a-a^*)) \in Q$ for all $a\in \cA_{\mathrm{loc}}$ it follows that $\omega(\delta_\familyH(a-a^*)) = 0$ for all $a\in \cA_{\mathrm{loc}}$, hence $\omega(\delta_\familyH(a)) = 0$ for all $a\in \cA_{\mathrm{loc}}$ because $a = \frac12\bigl((a-a^*) - \mathrm{i}((\mathrm{i}a) - (\mathrm{i}a)^*)\bigr)$.
    This shows that $\omega$ fulfils the stationarity condition \cref{eq:StationaryCondition} on $\cA_{\mathrm{loc}}$.
    Similarly, as $\frac{1}{2}\bigl(\varsigma\bigl(a^*\delta_\familyH(a)-\delta_\familyH(a^*)a\bigr)\bigr)-\gamma\bigl(\varsigma(a^*a)-\varsigma(a^*)\varsigma(a)\bigr) \in Q$ for all $a\in \cA_{\mathrm{loc}}$ and as characters are $*$-homomorphic,
    \begin{align}\label{eq:ProofCharacterSpectralGapIneq}
        \frac{1}{2}\mathcal{K}\bigl(\varsigma(a^* \delta_\familyH(a) - \delta_\familyH(a^*)a) \bigr) \geq \gamma \Bigl( \mathcal{K}\bigl(\varsigma(a^*a)\bigr) - \abs{\mathcal{K}\bigl(\varsigma(a)\bigr)}^2 \Bigr)
    \end{align}
    for all $a\in \cA_{\mathrm{loc}}$, so $\omega$ also satisfies \cref{eq:GappedSystemPropertyHermitian} on $\cA_{\mathrm{loc}}$.
    Equivalently, $\omega$ satisfies
    \begin{align}
        \omega\bigl(a^* [H^{\Lambda(n)}, a] \bigr) \geq \gamma \bigl( \omega(a^*a) - \abs{\omega(a)}^2 \bigr),
    \end{align}
    for all $a \in \cA_{\mathrm{loc}}$, that is, it fulfills \cref{eq:GappedSystemProperty} on $\cA_{\mathrm{loc}}$.

    Now, with $\cA_{\mathrm{loc}} = \bigcup_{n \in \mathds{N}} \cA_{\Lambda(n)}$ and $\familyH = \{H^{\Lambda(n)}\}_n$, following the same steps as in \cref{sec:AlgebraicDynamicalSystem}, we recover the desired dynamical system $(\cA=\overline{\cA_{\mathrm{loc}}}, \tau_\familyH, \delta_\familyH)$.
    As $\cA_{\mathrm{loc}}$ inherits the $C^*$-norm from $\cA$, the positivity and normalization of $\omega$ imply that $\omega$ is norm-bounded on $\cA_{\mathrm{loc}}$, and hence $\omega$ extends by continuity to a state on the quasi-local algebra $\cA$, which we again denote by $\omega$.
    Since $\cA_{\mathrm{loc}}$ is a core for $\delta_\familyH$, the extension $\omega$ satisfies \cref{eq:GappedSystemProperty} for all $a$ in the domain of $\delta_{\familyH}$.

    To sum up the proof of the convergence statement, we recover a KMS ground state $\omega$ on the dynamical system $(\cA, \tau_{\familyH}, \delta_{\familyH})$ admitting a locally non-degenerate bulk spectral gap at least $\gamma$, proving the ``if and only if'' statement of the theorem.

    Finally, we prove the ``furthermore'' statement.
    Since level-$(n,d)$ feasible sets are nested as $n, d$ increase, $\alpha^{\min}_{n, d}$ is monotonically nondecreasing and $\alpha^{\max}_{n, d}$ is monotonically nonincreasing, hence both converge.
    
    Every $\gamma$-gapped KMS ground state of $(\cA, \tau_{\familyH}, \delta_{\familyH})$ restricts to a feasible solution at every level.
    So, by optimality and taking $n, d \to \infty$, one has
    \begin{align}
        \lim_{n,d \to \infty} \alpha^{\min}_{n, d} \leq \langle O \rangle_{\min} \leq \langle O \rangle_{\max} \leq \lim_{n,d \to \infty} \alpha^{\max}_{n, d}.
    \end{align}
    
    It remains to show $\lim_{n,d \to \infty} \alpha^{\max}_{n, d} \leq \langle O \rangle_{\max}$.
    By the convergence proof, the hierarchy admits a limiting state $\omega$ on $(\cA, \tau_{\familyH}, \delta_{\familyH})$, and therefore
    \begin{align}
        \lim_{n,d \to \infty} \alpha^{\max}_{n, d} = \mathcal{L}(\varsigma(O)) \leq \mathcal{K}(\varsigma(O)) = \omega(O) \leq \langle O \rangle_{\max}.
    \end{align}
    The first equality is by construction, the first inequality follows from the choice of $\mathcal{K}$ (see the paragraph below \cref{eq:KadisonDuboisDecomposition}), the equality $\cK(\varsigma(O))=\omega(O)$ follows from the definition of $\omega$, and the final inequality follows from the definition of $\langle O\rangle_{\max}$.
    The case of minimization is analogous.
\end{proof}

Consequently, if the hierarchy with $O \coloneqq 1$ is infeasible at some level for a given $\gamma$, then $\gamma$ is larger than the bulk spectral gap of \cref{def:BulkSpectralGap}.

\begin{remark}[Scope of Construction]\label{rem:ScopeOfConstruction}
The proof does not rely on the specific lattice realization of the quasi-local algebra beyond the existence of
(i) a distinguished dense $*$-subalgebra $\cA_{\mathrm{loc}}\subset \cA$ (typically an increasing union of finite-dimensional $C^*$-subalgebras),
(ii) a strongly continuous dynamics $\tau$ on $\cA$ with a generator $\delta$ whose dense domain contains $\cA_{\mathrm{loc}}$ as a core, and
(iii) an archimedean constraint on the local generators.
Whenever these structural assumptions hold, the corresponding variant of \cref{def:GappedRelaxHierarchyStatePoly} leads to an analogous \cref{thm:CompletenessSDPHierarchy}.
In particular, the locally non-degenerate bulk spectral gap inequality can be replaced by any family of constraints that is expressible in the framework of state polynomial optimization, in order to tackle a specific class of problems or achieve better convergence behavior.

As an example beyond square lattices, we adapt our SDP hierarchy to the kagome lattice in \cref{sec:KagomeHeisenberg}.
\end{remark}

\begin{remark}[Hierarchy with local symmetry constraints]\label{rem:SymmetryRestrictedHierarchy}
    Let $\alpha$ be a $*$-automorphism of $\cA$ preserving $\cA_{\mathrm{loc}}$ (mapping local observables to local observables) and commuting with the dynamics $\tau_{\familyH}$.
    One may restrict the hierarchy to $\alpha$-symmetric KMS ground states by adding linear invariance constraints.
    At the level of \cref{eq:GappedRelaxHierarchy}, this means imposing
    \begin{align}
        \omega_n^d(\alpha(a))=\omega_n^d(a)
    \end{align}
    for all local observables $a$ for which both $a$ and $\alpha(a)$ lie in the truncation $\cA_{\Lambda(n)}^{2d}$.
    In the state-polynomial formulation \cref{eq:GappedRelaxHierarchyStatePoly}, the same symmetry is imposed by extending $\alpha$ to $\fatS$ and requiring
    \begin{align}
        \mathcal L_n^d(\alpha(p))=\mathcal L_n^d(p)
    \end{align}
    for the corresponding state polynomials $p$ and $\alpha(p)$ in the truncation $\thinS_{\Lambda(n)}^{2d}$.
    The proof of \cref{thm:CompletenessSDPHierarchy} extends directly, with the feasible set replaced by the $\alpha$-symmetric feasible set.
    Hence feasibility for every $(n,d)$ is equivalent to the existence of an $\alpha$-symmetric KMS ground state with locally non-degenerate bulk spectral gap of at least $\gamma$.
    
    Imposing physically motivated symmetries therefore reduces the certification problem to a specified subclass of KMS ground states.
    This restriction can substantially improve numerical performance, and is used in the numerical examples of \cref{sec:NumericalShowcase}.
\end{remark}

\section{Numerical showcases}\label{sec:NumericalShowcase}
For each finite level $(n,d)$, feasibility of \cref{def:GappedRelaxHierarchyStatePoly} at a given $\gamma$ is a necessary condition for the existence of a KMS ground state satisfying \cref{eq:GappedSystemProperty}.
Hence, if the relaxation is infeasible, then $\gamma$ is ruled out as a possible bulk spectral gap in the sense of \cref{def:BulkSpectralGap}.
Equivalently, maximizing $\gamma$ subject to feasibility gives a certified upper bound on the bulk spectral gap.
As $(n,d)$ increases, the relaxation becomes stricter, so these certified upper bounds form a monotonically nonincreasing sequence, yielding progressively tighter bounds in the thermodynamic limit.
This shows that bulk spectral gap is semi-decidable: one can always exclude a proposed bulk gap in finite time.

We test the SDP hierarchy on the transverse-field Ising model (\cref{sec:TransverseFieldIsing}) and the kagome lattice Heisenberg model (\cref{sec:KagomeHeisenberg}).
In practice, the computational cost can be reduced by exploiting structural constraints, reduced density matrix positivity, and symmetries specific to the model under study, as in~\cite{wang2026scalable}.
For the numerical implementations reported below, we use a selected subset of state monomials for each degree $d$ to index the moment matrix in order to better balance tightness of SDPs and computational costs.
The precise monomial sets used in each computation, together with reproducibility details, are provided at \url{https://github.com/wangjie212/SpectralGap}.
All numerical experiments were performed on a desktop computer with Intel(R) Core(TM) i9-10900 CPU@2.80GHz and 64G RAM except that the computation on the Kagome lattice Heisenberg model with $L=3,4$ was conducted on a workstation with Intel(R) Xeon(R) Gold 6348 CPU@2.60GHz and 773G RAM.

\subsection{Transverse-field Ising model}\label{sec:TransverseFieldIsing}
A standard analytically solvable example is the transverse-field Ising model on the infinite chain $\mathds{Z}$.
Here each site $i$ carries a qubit algebra $\mathcal A_i \simeq M_2(\mathds C)$ with Pauli operators $X_i,Z_i\in\mathcal A_i$.
For $g\geq 0$, the finite-volume Hamiltonian on $\Lambda(L)=\{-L,\dots,L\}\subset\mathds Z$ is
\begin{align}\label{eq:1DIsingModelHamiltonian}
    H_{\mathrm{TFIM},g}^{\Lambda(L)}
    = -\sum_{i=-L}^{L-1} Z_iZ_{i+1} + g\sum_{i=-L}^{L} X_i.
\end{align}
The exact solution indicates three regimes~\cite{mbeng2024quantum}:
\begin{enumerate}
    \item \emph{Disordered phase} $g > 1$: unique ground state with a finite-volume spectral gap that converges to $2(g-1)$ in the thermodynamic limit.
    \item \emph{Critical point} $g=1$: unique ground state; the finite-volume gap closes as $1/L$.
    \item \emph{Ordered phase} $0 \leq g < 1$: two degenerate ground states in the thermodynamic limit aligned along the $z$-axis, each with an energy gap $2(1-g)$.
\end{enumerate}
The set of KMS ground states for transverse-field Ising model is fully characterized by~\cite[Theorem~1]{araki1985ground}.
For $0 \leq g <1$, there exist exactly two pure KMS ground states whose equal classical mixture is the unique KMS ground state that is $\mathds{Z}_2$-symmetric under the mapping $Y_i \mapsto -Y_i$, $Z_i \mapsto -Z_i$.
By \cref{prop:NonPureKMSGroundStateDegenerate}, this $\mathds{Z}_2$-symmetric equal classical mixture has a locally non-degenerate bulk gap of $0$.
For $g \geq 1$, there exists a unique pure KMS ground state and it is also $\mathds{Z}_2$-symmetric.

Recall that our SDP hierarchy tests the necessary condition for a locally non-degenerate bulk gap $\gamma$ over all KMS ground states of the given dynamical system.  
In the numerical tests here, we impose the spin-flip $\mathds{Z}_2$ symmetry,
\begin{align}
    Y_i \mapsto -Y_i, Z_i \mapsto -Z_i,
\end{align}
for every site $i$, using the symmetry-restricted hierarchy of \cref{rem:SymmetryRestrictedHierarchy}.
The resulting hierarchy therefore certifies only the absence of a gapped $\mathds{Z}_2$-symmetric KMS ground state.
For $g \geq 1$, this restriction does not change the result, since the unique pure KMS ground state is $\mathds{Z}_2$-symmetric.
For $0 \leq g < 1$, the symmetry constraint selects the unique $\mathds{Z}_2$-symmetric equal classical mixture which has a bulk gap of $0$.
We impose mirror symmetry $i \mapsto -i$ to further improve the numerical tightness.

\begin{table}
\centering
\caption{
Largest feasible gap parameter returned by the SDP hierarchy for the transverse-field Ising chain.
Rows correspond to the system size parameter $L$, and columns to the monomial degree $d$.
All computations impose the spin-flip $\mathds{Z}_2$-symmetry.
Missing entries indicate values not computed within the available computational budget.
Mirror symmetry $i \mapsto -i$ is also imposed to further improve the numerical tightness.
}
\label{tab:IsingNumericalBounds}

\begin{minipage}{0.48\linewidth}
\centering
\begin{tabular}{c|ccc}
\toprule
\multicolumn{4}{c}{$g=0.5$} \\
\midrule
$L$ & $d=2$ & $d=3$ & $d=4$ \\
\midrule
2 & 0.52 & 0.44 & 0.44 \\
3 & 0.39 & 0.25 & 0.24 \\
4 & 0.36 & 0.19 & 0.15 \\
5 & 0.35 & 0.17 & 0.10 \\
6 & 0.34 & 0.16 & -- \\
7 & 0.34 & 0.15 & -- \\
\bottomrule
\end{tabular}
\end{minipage}
\hfill
\begin{minipage}{0.48\linewidth}
\centering
\begin{tabular}{c|ccc}
\toprule
\multicolumn{4}{c}{$g=1.0$} \\
\midrule
$L$ & $d=2$ & $d=3$ & $d=4$ \\
\midrule
2 & 1.18 & 1.14 & 1.14 \\
3 & 0.93 & 0.78 & 0.78 \\
4 & 0.85 & 0.60 & 0.59 \\
5 & 0.81 & 0.52 & 0.47 \\
6 & 0.80 & 0.47 & -- \\
7 & 0.79 & 0.45 & -- \\
\bottomrule
\end{tabular}
\end{minipage}

\vspace{1em}

\begin{minipage}{0.48\linewidth}
\centering
\begin{tabular}{c|ccc}
\toprule
\multicolumn{4}{c}{$g=1.5$} \\
\midrule
$L$ & $d=2$ & $d=3$ & $d=4$ \\
\midrule
2 & 2.00 & 1.98 & 1.98 \\
3 & 1.66 & 1.57 & 1.57 \\
4 & 1.52 & 1.37 & 1.37 \\
5 & 1.46 & 1.27 & 1.26 \\
6 & 1.42 & 1.20 & -- \\
7 & 1.40 & 1.17 & -- \\
\bottomrule
\end{tabular}
\end{minipage}
\hfill
\begin{minipage}{0.48\linewidth}
\centering
\begin{tabular}{c|ccc}
\toprule
\multicolumn{4}{c}{$g=2.0$} \\
\midrule
$L$ & $d=2$ & $d=3$ & $d=4$ \\
\midrule
2 & 2.89 & 2.88 & 2.88 \\
3 & 2.52 & 2.47 & 2.47 \\
4 & 2.36 & 2.29 & 2.29 \\
5 & 2.28 & 2.19 & 2.19 \\
6 & 2.24 & 2.14 & -- \\
7 & 2.21 & 2.11 & -- \\
\bottomrule
\end{tabular}
\end{minipage}
\end{table}

In \cref{fig:MainTextIsingResult} of the main text, we plot the largest feasible gap parameter at relaxation level $(L,d) = (n+1, n)$, for $n \in \{ 2, 3, 4 \}$ and $g\in\{0.5,1.0,1.5,2.0\}$, with the spin-flip $\mathds{Z}_2$-symmetry imposed.
The corresponding values, together with additional computations at other relaxation levels, are reported in \cref{tab:IsingNumericalBounds}.
As expected from the nested relaxations, the maximal feasible values decrease monotonically as either $L$ or $d$ is increased.

For $g=1.5$ and $g=2.0$, the upper bounds approach the known value $2(g-1)$ from above.
For $g=1$, the decrease is slower, likely due to critical finite-size scaling.
For $g=0.5$, the symmetry-restricted hierarchy returns bounds decreasing towards $0$, consistent with testing the $\mathds{Z}_2$-symmetric equal classical mixture, whose locally non-degenerate bulk gap is $0$.

\cref{tab:IsingNumericalBounds} also illustrates the different numerical roles of $L$ and $d$.
For the ordered case $g=0.5$, increasing the monomial degree $d$ is more effective than increasing the system size $L$ within our computational budget; the best bound obtained is $0.10$ at $L=5$ and $d=4$.
For the other values of $g$ considered here, increasing $L$ gives the more significant improvement at comparable computational cost.
Lastly, if the spin-flip symmetry is not imposed, the computations become substantially more expensive; within the same budget we could only reach $L=3$ and $d=3$, leading to much looser upper bounds in all cases.

\subsection{Kagome lattice Heisenberg model}\label{sec:KagomeHeisenberg}
The kagome lattice is a two-dimensional periodic graph $(V,E)$ built from corner-sharing triangles, where $V$ is the set of vertices and $E$ is the set of nearest-neighbour edges; see \cref{fig:MaintextKagomeLattice} of the main text for illustration.
It is a canonical setting for geometric frustration and has become a central playground for the study of quantum spin liquids.

We express the kagome lattice as the limit of finite sublattices.
To this end, we fix a distinguished vertex $v_0 \in V$ and let $d(\cdot,\cdot)$ denote the graph distance on $(V,E)$ (i.e., the length of a shortest edge-path).
For each $L\in\mathds{N}$ we define the finite region
\begin{align}
    \Lambda_{\mathrm{kagome}}(L) \coloneqq \{\, v\in V : d(v,v_0)\le L \,\}.
\end{align}
Then $(\Lambda_{\mathrm{kagome}}(L))_{L\in\mathds{N}}$ is an increasing exhaustion of $V$ by finite subsets.

Consider a spin-$\frac{1}{2}$ degree of freedom at each site $i$, with local algebra $\cA_i \simeq M_2(\mathds{C)}$ and Pauli matrices $X_i, Y_i, Z_i\in\cA_i$.
Clearly, $\cA_{\Lambda_{\mathrm{kagome}}(L)} \coloneqq \bigotimes_{i \in \Lambda_{\mathrm{kagome}}(L)} \cA_i$ is finite-dimensional and the algebra of quasi-local operators is $\cA \coloneqq \overline{\cup_L \cA_{\Lambda_{\mathrm{kagome}}(L)}}$.
On $\Lambda_{\mathrm{kagome}}(L)$, the spin-$\frac{1}{2}$ kagome lattice Heisenberg model (KLHM) has the local Hamiltonian
\begin{align}
    H^{\Lambda_{\mathrm{kagome}}(L)}_{\mathrm{KLHM}} = J \sum_{\langle i,j\rangle\in E: \  i,j \in \Lambda_{\mathrm{kagome}}(L)} \frac{1}{4}(X_i X_j + Y_i Y_j + Z_i Z_j),
\end{align}
for a fixed $J > 0$.

The numerical consensus suggests that the ground state of KLHM is consistent with quantum spin liquids.
However, whether the thermodynamic-limit spectrum is gapped or gapless remains debated.
DMRG studies on long kagome cylinders report evidence compatible with a gapped ground state~\cite{yan2011spin,depenbrock2012nature}, with an estimated singlet gap of $0.04$--$0.05J$~\cite{yan2011spin} and an estimated spin gap of $\approx 0.13J$ from cylinder extrapolations~\cite{depenbrock2012nature}.
(Recall that spin and singlet gaps are defined by restricting the allowed excitations to fixed spin sectors; they are not upper bounds on the bulk spectral gap which is due to all excitations. Furthermore, the values cited are estimates and should not be identified directly with the bulk spectral gap.)
Subsequent DMRG work on infinitely long cylinders, however, argued that the spin gap may be significantly smaller and favors a gapless Dirac quantum spin liquid~\cite{he2017signatures}.
Tensor-network approaches based on projected entangled pair/simplex states have also argued for a gapless ground state~\cite{liao2017gapless,jiang2019competing}.
Exact diagonalization~\cite{lauchli2016s} is restricted to comparatively small clusters (e.g., $48$ sites) and has not resolved the issue conclusively.
We refer to the recent review~\cite{zhu2025quantum} for a modern overview and further references.

In addition to the hierarchy with no symmetry restriction, to get sharper numerical bounds within physically motivated symmetry sector we also use the symmetry-restricted variants described in \cref{rem:SymmetryRestrictedHierarchy}.
First, consider the three $\pi$-rotations
\begin{equation}
    \begin{aligned}
        R_x &: (X_i,Y_i,Z_i)\mapsto (X_i,-Y_i,-Z_i),\\
        R_y &: (X_i,Y_i,Z_i)\mapsto (-X_i,Y_i,-Z_i),\\
        R_z &: (X_i,Y_i,Z_i)\mapsto (-X_i,-Y_i,Z_i),
    \end{aligned}
\end{equation}
for every site $i \in \Lambda_{\mathrm{kagome}}(L)$.
Invariance under the three $\pi$-rotations enforces $\omega(X_i)=\omega(Y_i)=\omega(Z_i)=0$, and is therefore compatible with the expected absence of one-site magnetization in a spin liquid.
Second, consider the even permutations $A_3$ of the spin components,
\begin{align}
    (X_i,Y_i,Z_i)\mapsto (Y_i,Z_i,X_i),
    \qquad
    (X_i,Y_i,Z_i)\mapsto (Z_i,X_i,Y_i).
\end{align}
Together with the three $\pi$-rotations $R_x, R_y, R_z$, these generate the tetrahedral group $T\subset SO(3)$, a finite subgroup of the global spin rotation symmetry, which is motivated by the $SU(2)$-symmetric spin liquids considered in~\cite{depenbrock2012nature}.
Third, we consider the time-reversal symmetry,
\begin{align}
    \Theta:(X_i,Y_i,Z_i)\mapsto(-X_i,-Y_i,-Z_i).
\end{align}
This additional condition rules out states with nonzero scalar chirality defined for sites $i,j,k \in \Lambda_{\mathrm{kagome}}(L)$ on a triangle:
\begin{align}
    \chi_{ijk}
    =
    \frac{1}{8}\Big(
    X_iY_jZ_k
    +Y_iZ_jX_k
    +Z_iX_jY_k
    -X_iZ_jY_k
    -Z_iY_jX_k
    -Y_iX_jZ_k
    \Big),
\end{align}
since $\Theta(\chi_{ijk})=-\chi_{ijk}$.
Thus bounds obtained with the time-reversal symmetry $\Theta$ apply only to nonchiral ground states, whereas bounds obtained without imposing $\Theta$, for instance with only the three $\pi$-rotation symmetry and the permutation symmetries, do not exclude chiral spin-liquid states by symmetry.
These symmetry restrictions are motivated by the properties of quantum spin liquids commonly discussed in the KLHM literature; they should not be interpreted as rigorously proven properties of all KMS ground states.

\begin{remark}\label{rem:KagomeTrivialUpperBound}
    Before presenting the numerical bounds, we record a crude analytical upper bound
    obtained directly from the bulk-gap inequality \cref{eq:GappedSystemProperty,def:BulkSpectralGap}, based only on the locality and the operator-norm bound.
    Indeed, for any local observable $a$ with $\omega(a^*a)-|\omega(a)|^2>0$, we have
    \begin{align}\label{eq:UpperBoundInequalityCompute}
        \Delta_{\mathrm{bulk}}
        \leq
        \frac{\omega(a^*\delta_{\mathcal{F}_{\mathrm{KLHM}}}(a))}
        {\omega(a^*a)-|\omega(a)|^2},
    \end{align}
    where $\delta_{\mathcal{F}_{\mathrm{KLHM}}}$ is the dynamic generator induced by the Hamiltonian family $\{ H^{\Lambda_{\mathrm{kagome}}(L)}_{\mathrm{KLHM}} \}$.
    
    Fix a KMS ground state $\omega$ and a site $0$ of the kagome lattice.
    For a unit vector $\vec n\in\mathds{R}^3$ and $\vec\sigma_0=(X_0,Y_0,Z_0)^T$, define the $0$-site projection
    \begin{align}
        P_{\vec n,+} \coloneqq \frac{1+\vec n\cdot\vec\sigma_0}{2}.
    \end{align}
    We choose $\vec n$ orthogonal to the one-site magnetization $\vec m \coloneqq \bigl(\omega(X_0),\omega(Y_0),\omega(Z_0)\bigr)$, so that $\omega(P_{\vec n,+})=1/2$.
    Since $P_{\vec n,+}$ is a projection, this gives
    \begin{align}
        \omega(P_{\vec n,+}^*P_{\vec n,+}) - |\omega(P_{\vec n,+})|^2 = \omega(P_{\vec n,+})-\omega(P_{\vec n,+})^2 = \frac{1}{4}.
    \end{align}
    For the numerator, only the four nearest-neighbour bonds incident to the site $0$ contribute to
    $\delta_{\mathcal F_{\mathrm{KLHM}}}(P_{\vec n,+})$.
    For each such bond,
    \begin{align}
        H_{0j}
        =
        \frac{J}{4}(X_0X_j+Y_0Y_j+Z_0Z_j).
    \end{align}
    Using the rotational invariance of this interaction, we may estimate in a basis
    where $P_{\vec n,+}=(1+Z_0)/2$.
    Then
    \begin{align}
        \|[H_{0j},P_{\vec n,+}]\|
        \leq
        \frac{J}{4}
        \Bigl(
            \|[X_0,P_{\vec n,+}]\|
            +
            \|[Y_0,P_{\vec n,+}]\|
        \Bigr) 
        \leq
        \frac{J}{2}.
    \end{align}
    Hence, using that the kagome lattice has a coordination number $4$,
    \begin{align}
        \omega\!\left(P_{\vec n,+}\,
        \delta_{\mathcal F_{\mathrm{KLHM}}}(P_{\vec n,+})\right)
        \leq
        \left\|
            P_{\vec n,+}\,
            \delta_{\mathcal F_{\mathrm{KLHM}}}(P_{\vec n,+})
        \right\| 
        \leq
        \sum_j \|[H_{0j},P_{\vec n,+}]\| 
        \leq
        2J.
    \end{align}
    It then follows that \cref{eq:GappedSystemProperty} cannot hold for any
    $\gamma>8J$, therefore $\Delta_{\mathrm{bulk}} \leq 8J$.

    Furthermore, suppose that $\omega$ is invariant under the three $\pi$-rotation $R_x, R_y, R_z$ symmetry.
    Then $\omega(X_0)=\omega(Y_0)=\omega(Z_0)=0$.
    By setting $a = X_0$, direct calculation shows that $\| [H_{0j}, a] \| \leq J$ and $\omega(a^*a) - \abs{\omega(a)}^2 = 1$.
    This implies that the bulk spectral gap for all $\omega$ that is $R_x, R_y, R_z$-symmetric is upper bounded by $4J$.
\end{remark}

By \cref{rem:ScopeOfConstruction}, our algorithm generalizes to KLHM using $\Lambda_{\mathrm{kagome}}(L)$.
At monomial degree $d=2, 3$, the relaxation hierarchy already gives certified bounds substantially sharper than the crude estimate of \cref{rem:KagomeTrivialUpperBound}; see \cref{tab:KagomeNumericalBounds} and also \cref{fig:MainTextKagomeResult} in the main text.

\begin{table}
\centering
\caption{
Certified upper bounds for the KLHM under different symmetry restrictions.
Missing entries indicate values not computed within the available computational budget.
Note that $L=1,2,3,4$ correspond to kagome patches of $5,13,27,45$ sites, respectively.
}
\label{tab:KagomeNumericalBounds}
\begin{tabular}{c|c|cc|cc}
\toprule
& No symmetry
& \multicolumn{2}{c|}{$R_x,R_y,R_z$} 
& \multicolumn{2}{c}{$R_x,R_y,R_z,A_3,\Theta$} \\
\midrule
$L$ & $d=2$ & $d=2$ & $d=3$ & $d=2$ & $d=3$ \\
\midrule
1 & 2.72J & 2.00J & --   & 2.01J & 2.01J \\
2 & 2.18J & 1.31J & --   & 1.31J & 1.28J \\
3 & --    & --    & --   & 1.24J & 1.15J \\
4 & --    & --    & --   & 1.22J & --    \\
\bottomrule
\end{tabular}
\end{table}

We first test the hierarchy without imposing any symmetry restriction on the KMS ground state.
At $d=2$, we obtain the certified upper bound $2.72J$ at L=1, and $2.18J$ at $L=2$.
Thus, we certify that no KMS ground state of the KLHM can have a locally non-degenerate bulk spectral gap larger than $2.18J$.
Equivalently, the bulk spectral gap of KLHM is less than $2.18J$.

Next, we impose the three $\pi$-spin rotation symmetries $R_x,R_y,R_z$.
These constraints restrict the optimization to KMS ground states with vanishing one-site magnetization.
The tightest upper bound we obtain in this symmetry class is $1.31J$ at $(L, d)= (2,2)$.
Equivalently, within this symmetry class, we certify that no KMS ground state of the KLHM can have a locally non-degenerate bulk spectral gap larger than $1.31J$.
Without imposing the time-reversal symmetry $\Theta$, the relaxation may still allow chiral ground states. 
To see this, we maximize the scalar chirality $\chi_{ijk}$ within the same SDP feasibility problem at fixed gap parameter $\gamma$.
At $(L,d)=(1,2)$ with $\gamma=1.99J$, we obtain
\begin{align}
    \omega(\chi_{ijk}) \leq 0.0484,
\end{align}
and at $(L,d)=(2,2)$ with $\gamma=1.30J$, we obtain
\begin{align}
    \omega(\chi_{ijk}) \leq 0.0558.
\end{align}
Namely, this means that any hypothetical KMS ground state in this symmetry section with locally non-degenerate bulk gap $\geq 1.99J$ has scalar chirality at most $0.0484$, and any hypothetical KMS ground state in this symmetry sector with gap $\geq 1.30J$ has scalar chirality at most $0.0558$.
(The later value $0.0558$, even though tested at a higher level of the hierarchy, is larger than $0.0484$, because now we also consider more ground states with possibly smaller gaps.)
These values should not be interpreted as evidence for or against chirality of the physical KLHM ground state; they only diagnose what is allowed by the corresponding SDP relaxation.

Finally, we impose all aforementioned symmetries: the three $\pi$-spin rotation symmetry, the cyclic permutation symmetry $A_3$, and the time-reversal symmetry $\Theta$.
These constraints restrict the search to nonchiral, spin-isotropic KMS ground states.
In practice, note that composing time reversal with the $\pi$-rotations gives the single-axis sign flips,
\begin{align}
    (X_i,Y_i,Z_i)\mapsto(-X_i,Y_i,Z_i),\qquad
    (X_i,Y_i,Z_i)\mapsto(X_i,-Y_i,Z_i),\qquad
    (X_i,Y_i,Z_i)\mapsto(X_i,Y_i,-Z_i),
\end{align}
which is what we imposed in the calculation.
This enforces additional conditions, for example, $\omega(X_iY_j) = \omega(X_iZ_j) = \omega(Y_iZ_j) = 0$.
In this sector, the best bound we obtain is $1.15J$ at $(L, d)= (3,3)$.
Thus, within this symmetry sector, we certify that no KMS ground state of the KLHM can have a locally non-degenerate bulk spectral gap larger than $1.15J$.
(For the entries reported as $2.01J$ at $L=1$, the corresponding SDPs are actually infeasible for a $\gamma$ very close to $2J$ from above, e.g., $2.00000001J$.)

The bounds in \cref{tab:KagomeNumericalBounds} are far above the best numerical estimates and extrapolations from variational methods, but they are fully certified upper bounds for the corresponding classes of KMS ground states in the thermodynamic limit.
The calculation with no symmetry restriction gives bounds to the bulk spectral gap of the KLHM, while the symmetry-restricted calculations give sharper bounds within physically motivated sectors.
The present kagome calculation is not optimized and should be viewed as a proof of principle.
Sharper certified bounds could become physically informative: for instance, driving the bound below the proposed singlet gap of~\cite{yan2011spin} or the proposed spin gap of~\cite{depenbrock2012nature} would rule out those values within the corresponding class of states, while driving the bound to zero would certify the absence of a bulk spectral gap in that class.
We leave a more optimized numerical implementation to future work.

\section{Undecidability and notions of spectral gaps}\label{sec:DiscussionSemiDecidability}
Undecidability results~\cite{cubitt2022undecidability,bausch2020undecidability} show that, for certain translationally invariant lattice Hamiltonians with bounded local dimensions and coefficients, the existence of a spectral gap is algorithmically undecidable when the gap is defined via a thermodynamic limit of finite systems under fixed boundary conditions.
On the other hand, as in the leading paragraph of \cref{sec:NumericalShowcase}, our SDP hierarchy shows that the spectral gap problem is semi-decidable.

Since our SDP hierarchy addresses the \emph{bulk} (algebraic) notion of spectral gap, there is no contradiction.
This section recalls the results of~\cite{cubitt2022undecidability,bausch2020undecidability} and explains why the undecidability for the algebraic bulk gap remains open.
More broadly, this illustrates a caution about mathematical idealizations of the physical world: distinct infinite-volume limits and gap notions can lead to genuinely different algorithmic questions.

In \cref{sec:FiniteVolumeSpectralGap} we recall the more conventional finite-volume definition of the spectral gap under fixed boundary conditions.
In \cref{sec:CPWResultRevisit} we summarize the 2D undecidability construction of~\cite{cubitt2022undecidability}, explain why it does not contradict our bulk-gap SDP hierarchy (\cref{rem:CompatibilityWithCPW15}), and show that the construction is necessarily sensitive to (generalized) boundary conditions (\cref{prop:CPW15GappedInDiffBoundaryCondition}).
Finally, in \cref{sec:BauResultOpenQuestion} we discuss the 1D result of~\cite{bausch2020undecidability} and clarify what additional ingredients would be needed to extend the undecidability to the algebraic bulk definition.

\subsection{Finite-volume spectral gaps and boundary conditions}\label{sec:FiniteVolumeSpectralGap}
We recall the finite-volume notion of spectral gap used in~\cite{cubitt2022undecidability}, which is formulated in terms of the spectra of finite-volume Hamiltonians along a fixed choice of boundary condition.
This dependence will be central to the comparison with algebraic bulk notions.

We start by formalizing the boundary conditions.
\begin{definition}\label{def:BoundaryCondition}
Let $\{H^{\latticeL}\}$ be a family of local Hamiltonians on $\mathds{Z}^{D}$ associated with a finite-range interaction $\Phi$.
A \emph{boundary condition} on $\latticeL$ is a choice of a self-adjoint operator $B^{\latticeL} \in \cA$ whose support is contained in an $R$-neighbourhood of the boundary $\partial\latticeL$, for some fixed $R$ independent of $L$.
We set the family of Hamiltonians with the boundary condition $\{B^{\latticeL}\}$ to be
\begin{align}
    H^{\latticeL}_B \coloneqq H^{\latticeL} + B^{\latticeL},
\end{align}
which is again finitely supported for every $L$.

The \emph{open boundary condition} (OBC) corresponds to $B^{\latticeL}_{\OBC} \coloneqq 0$, that is, $H^{\latticeL}_{\OBC} = H^{\latticeL}$.
For the \emph{periodic boundary condition} (PBC), we replace $\latticeL$ with the discrete torus $\mathds{T}_L^D \coloneqq (\mathds{Z}/L\mathds{Z})^D$ and define $H^{\latticeL}_{\PBC}$ by the same interaction $\Phi$, with sites identified modulo $L$.
(In the nearest-neighbour case, this is equivalent to adding wrap-around couplings across opposite faces.)
\end{definition}

Fix a boundary condition $\{B^{\latticeL}\}_L$.
Let $\mathrm{Spec}(H^{\latticeL}_B) = \{\lambda_0(H^{\latticeL}_B), \lambda_1(H^{\latticeL}_B), \dots\}$ denote the spectrum with non-decreasing ordering $\lambda_0(H^{\latticeL}_B) \leq \lambda_1(H^{\latticeL}_B) \leq \cdots$.
Define the {ground state energy} as the smallest eigenvalue $\lambda_0(H^{\latticeL}_B)$ and call the associated normalized eigenvector a ground state.
The ground state energy is said to be {non-degenerate} if $\lambda_0(H^{\latticeL}_B) < \lambda_1(H^{\latticeL}_B)$, that is, the dimension of the eigenspace corresponding to the lowest eigenvalue is $1$.
There are two important classes of behavior of the spectrum $\SpectrumNoBracket{(H^{\latticeL}_B)}$ at the thermodynamic limit as $L \to \infty$.

\begin{definition}[Non-degenerate uniform gap with a boundary condition]\label{def:GappedCPW15}
    Consider the family $\{H^{\latticeL}\}_L$ of Hamiltonians with the boundary condition $\{B^{\latticeL}\}$.
    We say that the system under the boundary condition $\{B^{\latticeL}\}$ is non-degenerate uniformly gapped if $\exists \gamma > 0, \exists L_0$, $\forall L > L_0$ the ground state energy $\lambda_0(H^{\latticeL}_B)$ is non-degenerate, and
    \begin{align}
        \Spectrum{H^{\latticeL}_B} \cap \left(\lambda_0(H^{\latticeL}_B), \lambda_0(H^{\latticeL}_B) + \gamma\right) = \emptyset.
    \end{align}
    In this case, we say that the system under the boundary condition $\{B^{\latticeL}\}$ has a non-degenerate uniform spectral gap of at least $\gamma$.
\end{definition}

\begin{remark}\label{rem:UniformGapVSAlgebraicGap}
    \cref{def:GappedCPW15} is stricter than the conventional definition in the usual physics literature for gapped systems.
    In particular, this notion is stronger than the locally non-degenerate bulk spectral gap on KMS ground states (\cref{def:GappedAlgebraic}) since it requires a uniform bound for excitations in finite systems.
    Indeed, a finite-volume uniform gap for some chosen boundary condition is a lower bound for the bulk gap~\cite[Proposition~5.4]{bachmann2016lieb}.

    The converse implication is false, that is, \cref{def:GappedAlgebraic} does not imply \cref{def:GappedCPW15}.
    First, the local non-degeneracy of \cref{def:GappedAlgebraic} does not necessarily imply the global non-degeneracy of \cref{def:GappedCPW15}.
    In addition, finite-volume gaps may close due to boundary (edge) excitations even when the bulk is gapped (e.g., the AKLT model with OBC~\cite{affleck1987rigorous}).
    
    Moreover, as in \cref{rem:NonDegeneracyDifferent}, we remark again that ``non-degeneracy'' in \cref{def:GappedCPW15} refers to the global uniqueness of ground states, whereas the algebraic \cref{def:GappedAlgebraic} refers to non-degeneracy in the GNS representation of KMS ground states.
\end{remark}

We next state the conventional gapless condition.
\begin{definition}[Gapless with a boundary condition]\label{def:GaplessConventional}
    Consider the family $\{H^{\latticeL}\}_L$ of Hamiltonians with the boundary condition $\{B^{\latticeL}\}$.
 We say that the system under the boundary condition $\{B^{\latticeL}\}$ is gapless if $\forall \epsilon >0$, $\exists L_0$, $\forall L > L_0$
    \begin{align}
        \Delta(H^{\latticeL}_B) = \lambda_1(H^{\latticeL}_B) - \lambda_0(H^{\latticeL}_B) < \epsilon.
    \end{align}
\end{definition}
Importantly, being gapless as in \cref{def:GaplessConventional} does not imply that the system is still bulk-gapless (\cref{def:GaplessAlgebraic}), due to the potential existence of edge states associated with the given boundary condition; again, see the AKLT model with open boundary condition.
We remark that the gapless \cref{def:GaplessConventional} is not as stringent as the gapless condition considered in~\cite{cubitt2022undecidability}, where they instead adopt the continuum/uniform condition near the ground state energy.

\begin{definition}[Uniformly gapless with a boundary condition]\label{def:UniformGaplessCPW15}
    Consider the family $\{H^{\latticeL}\}_L$ of Hamiltonians with the boundary condition $\{B^{\latticeL}\}$.
 We say that the system under the boundary condition $\{B^{\latticeL}\}$ is uniformly gapless if $\exists c>0$, $\forall \epsilon >0$, $\exists L_0$, $\forall L > L_0$ the spectrum $\SpectrumNoBracket{(H^{\latticeL}_B)}$ is $\epsilon$-dense in the interval
    \begin{align}
        [\lambda_0(H^{\latticeL}_B), \lambda_0(H^{\latticeL}_B) + c].
    \end{align}
    That is, every point in the interval lies within a distance $\epsilon$ of some eigenvalue of $H^{\latticeL}_B$.
\end{definition}
The uniformly gapless condition is strictly stronger than \cref{def:GaplessConventional}:
For example, $H^{\latticeL}_B$ with eigenvalues $\lambda_0, \lambda_k = \lambda_0 + 1/(1+k)$ for $k=1, \dots, L$ is gapless but not uniformly gapless, since the spectrum near $\lambda_0$ is too discrete to be $\epsilon$-dense in any nontrivial interval.
Another example is a system with a degenerate ground space but a spectral gap above it (e.g., $\lambda_0=\lambda_1<\lambda_2$): it is gapless in the sense of \cref{def:GaplessConventional} but not uniformly gapless.
The AKLT chain with open boundary condition is an example of this type.
On the other hand, $H^{\latticeL}_B$ with $\lambda_0, \lambda_k = \lambda_0 + k/L$ is both gapless and uniformly gapless.

Finally, we recall the relationship between weak-$*$ limit points of finite-volume ground states (as the volume grows and boundary conditions vary) and KMS ground states of the associated infinite-volume dynamics.
\begin{remark}\label{rem:RelationLimitGroundAndKMS}
    Consider a family of finite-volume Hamiltonians $\familyH=\{H^{\latticeL}\}_L$ induced by a finite-range interaction $\Phi$, and let $(\cA,\tau_{\familyH},\delta_{\familyH})$ be the associated dynamical system.
    For any boundary condition $\{B^{\latticeL}\}_L$ as in \cref{def:BoundaryCondition}, let $\ket{\psi^{\latticeL}_B}$ be a ground state of $\lambda_0(H^{\latticeL}_B)$ and let $\omega^{\latticeL}_B$ denote the corresponding (finite-volume) vector state defined as
    \begin{align}
        \omega^{\latticeL}_B(\cdot) \coloneqq \sandwich{\psi^{\latticeL}_B}{\cdot}{\psi^{\latticeL}_B}.
    \end{align}
    Clearly, any weak-$*$ limit point of $\{\omega^{\latticeL}_B\}_L$ (for a fixed boundary condition) is a KMS ground state $\omega_B$ on $(\cA,\tau_{\familyH},\delta_{\familyH})$.

    The converse direction requires a more general notion of boundary conditions beyond~\cref{def:BoundaryCondition} from a variational perspective.
    Consider the conditional Hamiltonian on $\latticeL$
    \begin{align}
        \widetilde H^{\latticeL} \coloneqq H^{\latticeL} + W^{\latticeL} ,
        \quad
        W^{\latticeL} \coloneqq \sum_{\substack{S\cap\latticeL\neq\emptyset\\[2pt]
        S\cap\latticeL^{c}\neq\emptyset}}\Phi(S)\,,
    \end{align}
    where the surface term $W^{\latticeL}$ is well-defined since $\Phi$ is assumed to be of finite-range.
    By \cite[Theorem~6.2.52]{BratteliRobinsonVol2}, a state $\omega$ is a KMS ground state if and only if, for every finite region $\latticeL$,
    \begin{align}
        \omega(\widetilde{H}^{\latticeL}) = \inf_{\omega'\in C^\omega_{\latticeL}} \omega'(\widetilde H^{\latticeL}),
    \end{align}
    where
    \begin{align}
        C^\omega_{\latticeL} \coloneqq \{\omega' \text{ is a state on } \cA \mid \omega'|_{\cA_{\latticeL^c}} = \omega|_{\cA_{\latticeL^c}}\}.
    \end{align}
    In this sense, the exterior restriction $\omega|_{\cA_{\latticeL^c}}$ can be regarded as a boundary condition of the finite subsystem $\latticeL$.
    Thus, every KMS ground state is locally a finite-volume ground state in this conditional sense, with boundary condition given by its own exterior restriction.
    Consequently, every KMS ground state arises as a weak-$*$ limit point of such finite‑volume ground states for a suitable choice of these generalized boundary conditions.
    This notion is more general than the usual boundary conditions of~\cref{def:BoundaryCondition}.
\end{remark}

\subsection{Revisiting 2D spectral gap undecidability through bulk-gap SDP hierarchy}\label{sec:CPWResultRevisit}
The authors of~\cite{cubitt2022undecidability} construct, for each input $m$, a family of Hamiltonians $\{H^{\latticeL}_{\mathrm{CPW}}(m)\}_L$ on the two-dimensional lattice $\mathds{Z}^2$ with the following properties: the interaction is translationally invariant, nearest-neighbour, and has uniformly bounded coefficients, and the Hilbert space dimension on each site is uniformly bounded.
For open boundary conditions, each such family falls into one of two promised cases: it is either non-degenerate uniformly gapped in the sense of \cref{def:GappedCPW15} with the gap $\gamma_{\mathrm{CPW}}(m) \geq 1$, or uniformly gapless in the sense of \cref{def:UniformGaplessCPW15}.

Crucially, the spectral behavior of $\{H^{\latticeL}_{\mathrm{CPW}}(m)\}_L$ under open boundary conditions encodes the halting problem for a fixed universal Turing machine (UTM):
\begin{equation}\label{eq:HaltingToGapCPW15}
    \begin{cases}
        \textbf{YES} \text{: UTM halts on } m \implies 
        \begin{aligned}[t]
            & \{H^{\latticeL}_{\mathrm{CPW}}(m)\}_L \text{ is non-degenerate uniformly gapped} \\
            & \text{under OBC (\cref{def:GappedCPW15}) with } \gamma_{\mathrm{CPW}}(m) \geq 1,
        \end{aligned}
        \\[2em]
        \textbf{NO} \text{: UTM does not halt on } m \implies 
        \begin{aligned}[t]
             & \{H^{\latticeL}_{\mathrm{CPW}}(m)\}_L \text{ is uniformly gapless, and thus gapless under OBC} \\
             & \text{(\cref{def:UniformGaplessCPW15}, and thus \cref{def:GaplessConventional}).}
        \end{aligned}
    \end{cases}
\end{equation}
In particular, deciding whether $\{H^{\latticeL}_{\mathrm{CPW}}(m)\}_L$ is gapped or gapless under open
boundary conditions is equivalent to deciding whether the UTM halts on $m$, and is therefore
undecidable.

Now, let us apply our SDP hierarchy for bulk spectral gaps (\cref{def:HierarchyOfRelaxations,def:GappedRelaxHierarchyStatePoly}) to the decision problem \cref{eq:HaltingToGapCPW15} of $\{H^{\latticeL}_{\mathrm{CPW}}(m)\}_L$.
Specifically, consider the following algorithm:
\begin{algorithm}[H]
  \caption{Spectral gap semi-decision via SDP hierarchy (\cref{def:HierarchyOfRelaxations,def:GappedRelaxHierarchyStatePoly})}
  \label{alg:SpectralGapSemi}
  \begin{algorithmic}[1]
    \REQUIRE Dynamical system $(\cA,\tau_{\familyH_{\mathrm{CPW}}(m)},\delta_{\familyH_{\mathrm{CPW}}(m)})$ for $\familyH_{\mathrm{CPW}}(m)=\{H^{\latticeL}_{\mathrm{CPW}}(m)\}$
    \ENSURE Fixed $\gamma_0 \in (0,1)$, returns \textbf{NO} if no KMS ground state with locally non-degenerate bulk gap $\geq \gamma_0$ exists (equivalently, if the bulk spectral gap of the dynamical system is $<\gamma_0$); otherwise loops forever.
    \LOOP                                         % infinite loop begins
      \STATE Increment both $n$ and $d$ by $1$ (starting at $n = d=1$)
      \STATE Solve the level-$(n,d)$ feasibility SDP of \cref{def:HierarchyOfRelaxations,def:GappedRelaxHierarchyStatePoly} with $\gamma_0$
      \IF{SDP is infeasible}                      % witness of NO
        \RETURN \textbf{NO}                               % halts on NO-instances
      \ENDIF
    \ENDLOOP                                     % loop back on YES-instances
  \end{algorithmic}
\end{algorithm}

The completeness of our SDP hierarchy (\cref{thm:CompletenessSDPHierarchy}) has the following implications:
\begin{enumerate}
    \item If the input $\{H^{\latticeL}_{\mathrm{CPW}}(m)\}_L$ corresponds to a \textbf{YES}-instance of the UTM, then with the open boundary condition, the system admits a non-degenerate uniform gap $\geq 1$ (in the sense of \cref{def:GappedCPW15}).
    By \cref{rem:UniformGapVSAlgebraicGap}, there exists a KMS ground state with a local non-degenerate bulk spectral gap $\geq 1$ (in the sense of \cref{def:GappedAlgebraic}) due to~\cite[Proposition~5.4]{bachmann2016lieb}.
    Consequently, \cref{alg:SpectralGapSemi}, which certifies the existence of a KMS ground state with a gap of at least $\gamma_0 \in (0,1)$, runs indefinitely.
    \item If \cref{alg:SpectralGapSemi} returns \textbf{NO} at some $(n_0, d_0)$, then we have a proof that no KMS ground state with a gap $\geq 1$ exists, including the limiting state $\omega_{\OBC}$ from the ground states due to the open boundary condition (\cref{rem:RelationLimitGroundAndKMS}).
    By the spectral property of $\{H^{\latticeL}_{\mathrm{CPW}}(m)\}_L$, it follows that the system must be gapless with the open boundary condition (\cref{def:GaplessConventional}).
\end{enumerate}
That is, \cref{alg:SpectralGapSemi} can semi-decide if the given $\{H^{\latticeL}_{\mathrm{CPW}}(m)\}_L$ is gapless with the open boundary condition at some finite level $(n_0, d_0)$ and thus in finite time.

This appears to be in tension with the well-known undecidability results of~\cite{cubitt2022undecidability} by the following statement:
\begin{quote}
    (A) Run the UTM and \cref{alg:SpectralGapSemi} simultaneously.
    In the \textbf{YES}-instance of \cref{eq:HaltingToGapCPW15}, the UTM halts in finite time and \cref{alg:SpectralGapSemi} runs indefinitely.
    In the \textbf{NO}-instance of \cref{eq:HaltingToGapCPW15}, the UTM runs indefinitely.
    But, since the system is gapless, \cref{alg:SpectralGapSemi} halts in finite time.
    Therefore, one can fully decide the halting problem in finite time with \cref{alg:SpectralGapSemi}, which is a contradiction.
\end{quote}
The Statement (A) is, however, \emph{false} due to a mix-up of various definitions modeling spectral gaps in the thermodynamic limit and dependence on boundary conditions.
We explain this in the following remark.

\begin{remark}[Compatibility with~\cite{cubitt2022undecidability}]\label{rem:CompatibilityWithCPW15}
Let us analyze two cases of the given UTM on input $m$.
\begin{enumerate}[label=(\roman*)]
    \item \emph{If the UTM halts:}
    This is already discussed by bullet point 1.
    Our SDP hierarchy is feasible at every level and \cref{alg:SpectralGapSemi} runs indefinitely.
    
    \item \emph{If the UTM does not halt:}
    The system is proven to be gapless with open boundary condition in the sense of \cref{def:GaplessConventional}.
    Let $\omega_{\OBC}$ be a weak-$*$ limit point of finite-volume ground states of $\{H^{\latticeL}_{\mathrm{CPW}}(m)\}$ with open boundary conditions.
    However, it is not immediate that this finite-volume gaplessness implies that $\omega_{\OBC}$ is bulk-gapless in the GNS sense.
    Moreover, this does not exclude the possibility that there exists another KMS ground state which admits a locally non-degenerate bulk spectral gap of at least $1$.
    (By the equivalence of \cref{rem:RelationLimitGroundAndKMS}, there might exist a different boundary condition $B$ such that $\{H^{\latticeL}_{\mathrm{CPW}}(m)\}$ admits a limit state with a locally non-degenerate spectral gap.)
    If this is the case, \cref{alg:SpectralGapSemi} does not terminate in finite time necessarily, and therefore falsifies the Statement (A).
\end{enumerate}
To sum up, the \emph{\textbf{NO}}-instance of the decision problem \cref{eq:HaltingToGapCPW15} does not mean that \cref{alg:SpectralGapSemi} necessarily halts in finite time, due to, e.g., the possibility of an alternative boundary condition for which the spectral gap is gapped.
Nonetheless, such dependence on the boundary conditions is expected since they serve as the initial states to the UTM in the construction of~\cite{cubitt2022undecidability}.
\end{remark}

We conclude this subsection by emphasizing that, for the Hamiltonians constructed by~\cite{cubitt2022undecidability}, spectral properties in the thermodynamic limit can depend on the choice of generalized boundary conditions.
\begin{proposition}\label{prop:CPW15GappedInDiffBoundaryCondition}
    Consider the family $\{H^{\latticeL}_{\mathrm{CPW}}(m)\}_L$ associated with the fixed UTM in~\cite{cubitt2022undecidability}.
    Fix any threshold $\gamma_0 \in (0, 1)$.
    Then there exist a non-halting input $m'$ and a KMS ground state $\widetilde{\omega}$ with a locally non-degenerate bulk spectral gap at least $\gamma_0$.
    
    It follows from \cref{rem:RelationLimitGroundAndKMS} that this $\widetilde{\omega}$ is a weak-$*$ limit of finite-volume ground states with some generalized boundary condition.
\end{proposition}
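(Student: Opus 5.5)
Our plan is to argue by contradiction, combining the semi-decidability of the bulk gap (\cref{thm:CompletenessSDPHierarchy}, via \cref{alg:SpectralGapSemi}) with the undecidability of the halting problem. Suppose, to the contrary, that for every non-halting input $m'$ there is no KMS ground state of $(\cA,\tau_{\familyH_{\mathrm{CPW}}(m')},\delta_{\familyH_{\mathrm{CPW}}(m')})$ with a locally non-degenerate bulk gap of at least $\gamma_0$. By the completeness statement of \cref{thm:CompletenessSDPHierarchy}, for every such $m'$ the feasibility hierarchy at threshold $\gamma_0$ becomes infeasible at some finite level $(n_0,d_0)$. Consequently \cref{alg:SpectralGapSemi} halts with \textbf{NO} on every non-halting input. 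We must check that the algorithm is effective: the Hamiltonians $H^{\latticeL}_{\mathrm{CPW}}(m)$ are computable from $m$ (the construction of~\cite{cubitt2022undecidability} is explicit), and each level-$(n,d)$ feasibility problem is decidable. Decidability can be obtained from Tarski--Seidenberg quantifier elimination as in \cref{rem:ComputabilityQuantifierElimination}, since the constraints are polynomial with algebraic (computable) coefficients, so exact infeasibility can be decided; floating-point SDP is not needed.

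On halting inputs the system is non-degenerate uniformly gapped under OBC with $\gamma_{\mathrm{CPW}}(m)\ge 1>\gamma_0$. By~\cite[Proposition~5.4]{bachmann2016lieb}, as recalled in \cref{rem:UniformGapVSAlgebraicGap}, the OBC limit state is then a KMS ground state with locally non-degenerate bulk gap at least $1$. Hence the hierarchy is feasible at every level, and \cref{alg:SpectralGapSemi} runs forever. Under the contrary assumption, \cref{alg:SpectralGapSemi} would therefore halt exactly on the non-halting inputs, so the complement of the halting set would be recursively enumerable. Since the halting set itself is recursively enumerable (simulate the UTM), it would be decidable, which is a contradiction. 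This yields a non-halting $m'$ together with a KMS ground state $\widetilde\omega$ having a locally non-degenerate bulk gap of at least $\gamma_0$. The final sentence of the proposition then follows directly from \cref{rem:RelationLimitGroundAndKMS}: every KMS ground state is a weak-$*$ limit of finite-volume ground states for the generalized boundary condition given by its own exterior restriction.

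The main obstacle is justifying that the halting/gapped correspondence holds for every input $m$, so that the reduction applies uniformly. One must use that the CPW promise (gapped with gap $\ge1$ versus uniformly gapless) holds for all $m$, and that the map $m\mapsto\familyH_{\mathrm{CPW}}(m)$ is computable with exactly representable entries. I would also double-check the direction of the Bachmann et al.\ implication: that the uniform OBC gap with a unique finite-volume ground state yields local non-degeneracy of the GNS kernel, and not merely a spectral gap above a possibly degenerate kernel. If that step were delicate, I would instead cite it exactly as the paper uses it in bullet point 1 before \cref{rem:CompatibilityWithCPW15}. The argument is non-constructive: it produces $m'$ and $\widetilde\omega$ only by contradiction, which is all the statement requires.
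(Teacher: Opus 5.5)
Your proposal is correct and is the same contradiction argument the paper uses: negate, use completeness of the hierarchy so that \cref{alg:SpectralGapSemi} halts on every non-halting input, and combine this with feasibility on halting inputs to decide the halting problem. It is executed more carefully than the paper's terse proof: you negate at the threshold $\gamma_0$ (the paper writes ``strictly smaller than $1$'', which is not quite the needed negation), justify exact decidability of each level via quantifier elimination, and spell out the recursive-enumerability step; your worry about the Bachmann et al.\ step is also easily dispelled, since a unique finite-volume ground state with gap $\geq 1$ gives $\omega_L(a^*[H^{\latticeL}_{\mathrm{CPW}}(m),a])\geq \omega_L(a^*a)-\abs{\omega_L(a)}^2$ for all $a\in\cA_{\latticeL}$, and for $a$ supported away from the boundary this passes to the weak-$*$ limit, which is exactly the criterion of \cref{prop:GappedSystemProperty}.
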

\begin{proof}
    Suppose not.
    Then for every non-halting input $m'$, every KMS ground state of the dynamical system induced by $\{H^{\latticeL}_{\mathrm{CPW}}(m')\}_L$ has locally non-degenerate bulk gap strictly smaller than $1$.
    By the completeness of the SDP hierarchy (\cref{thm:CompletenessSDPHierarchy}), it follows that running \cref{alg:SpectralGapSemi} with $\gamma_0$ on the instance $\{H^{\latticeL}_{\mathrm{CPW}}(m')\}_L$ must terminate in finite time for every such $m'$.
    Consequently, Statement (A) can be used to contradict the undecidability of the halting problem.
    Finally, we invoke the discussion of \cref{rem:RelationLimitGroundAndKMS}.
\end{proof}

\subsection{The 1D construction and why bulk gap undecidability remains open}\label{sec:BauResultOpenQuestion}
A natural question is whether the \emph{algebraic bulk} spectral-gap problem is undecidable, in analogy with the undecidability results for finite-volume gaps under fixed boundary conditions.
As discussed in \cref{sec:CPWResultRevisit}, our SDP hierarchy (\cref{alg:SpectralGapSemi}) semi-decides the absence of KMS ground states with locally non-degenerate bulk spectral gaps in finite time.
Equivalently, it semi-decides when a proposed $\gamma$ lies above the bulk spectral gap of \cref{def:BulkSpectralGap}.
Thus, to establish undecidability in the bulk sense, one would need a fixed UTM and a computable map
$m \mapsto \{H^{\latticeL}_{\mathrm{bulk}}(m)\}_L$ such that

\begin{equation}\label{eq:HaltingToGapinBulk}
    \begin{cases}
        \textbf{YES} \text{: UTM halts on } m \implies 
        \begin{aligned}[t]
             & \text{ every KMS ground state is bulk-gapless} \\
             & \text{ in the sense of \cref{def:GaplessAlgebraic},}
        \end{aligned}
        \\[2em]
        \textbf{NO} \text{: UTM does not halt on } m \implies 
        \begin{aligned}[t]
            & \text{there exists a locally non-degenerate bulk-gapped KMS} \\
            & \text{ground state in the sense of \cref{def:GappedAlgebraic} with } \gamma_{\mathrm{bulk}}(m) \geq 1.
        \end{aligned}
    \end{cases}
\end{equation}
Note that this has the opposite halting/gap direction compared to \cref{eq:HaltingToGapCPW15}.

A natural candidate to obtain such a reversal is the one-dimensional construction of~\cite{bausch2020undecidability}.
There, for each input $m$, the authors construct a family of translationally invariant nearest-neighbour Hamiltonians $\{H^{\latticeL}_{\mathrm{BCLP}}(m)\}_L$ on the one-dimensional lattice $\mathds{Z}$.
The local interaction terms are uniformly bounded and the Hilbert space dimension of each site is uniformly bounded.
Under open boundary conditions, the family is promised to fall into one of two cases: it is either non-degenerate uniformly gapped in the sense of \cref{def:GappedCPW15} with the gap $\gamma_{\mathrm{BCLP}}(m) \geq 1$, or uniformly gapless in the sense of \cref{def:UniformGaplessCPW15}.

As in the 2D case, undecidability follows because the spectral behavior under open boundary conditions encodes the halting problem for a fixed UTM; however, the correspondence is reversed relative to~\cite{cubitt2022undecidability}:
\begin{equation}\label{eq:HaltingToGapBur20}
    \begin{cases}
        \textbf{YES} \text{: UTM halts on } m \implies 
        \begin{aligned}[t]
             & \{H^{\latticeL}_{\mathrm{BCLP}}(m)\}_L \text{ is uniformly gapless, and thus gapless under OBC} \\
             & \text{(\cref{def:UniformGaplessCPW15}, and thus \cref{def:GaplessConventional}).}
        \end{aligned}
        \\[2em]
        \textbf{NO} \text{: UTM does not halt on } m \implies 
        \begin{aligned}[t]
            & \{H^{\latticeL}_{\mathrm{BCLP}}(m)\}_L \text{ is non-degenerate uniformly gapped} \\
            & \text{under OBC (\cref{def:GappedCPW15}) with } \gamma_{\mathrm{BCLP}}(m) \geq 1,
        \end{aligned}
    \end{cases}
\end{equation}
Moreover, they obtain a partial extension to periodic boundary conditions for chain lengths promised to be coprime to a fixed integer $P$, at the cost of a local dimension growing linearly with $P$.

However, the construction of~\cite{bausch2020undecidability} does not directly yield \cref{eq:HaltingToGapinBulk}.
Indeed:
\begin{enumerate}
    \item If the UTM does not halt on $m$, then the KMS ground state induced by the weak-$*$ limit of open boundary ground states (denoted by $\omega_{\OBC}$) is a locally non-degenerate bulk-gapped KMS ground state.
    This matches the \textbf{NO}-instance of \cref{eq:HaltingToGapinBulk}.
    \item If the UTM halts on $m$, then \cref{eq:HaltingToGapBur20} does not immediately imply that the limit KMS ground state $\omega_{\OBC}$ is bulk-gapless.
    Furthermore, this does not exclude the existence of other KMS ground states (arising from different boundary conditions) that remain locally non-degenerate bulk-gapped.
 Hence, the \textbf{YES}-instance of \cref{eq:HaltingToGapinBulk} is not established.
\end{enumerate}
In summary, while it is plausible that the bulk spectral-gap problem is undecidable, a proof would require additional ingredients ensuring that \emph{all} KMS ground states are bulk-gapless in the halting case, rather than only the state selected by a particular boundary condition.
For this reason, undecidability in the algebraic bulk sense remains open.

\end{document}